\documentclass[12pt,english]{article}
\usepackage[T1]{fontenc}
\usepackage[dvips]{geometry}
\geometry{verbose,tmargin=1.25in,bmargin=1.25in,lmargin=1.25in,rmargin=1.25in}
\setcounter{secnumdepth}{1}
\setcounter{tocdepth}{1}
\usepackage{amsmath}
\usepackage{amsthm}
\usepackage{amssymb}
\usepackage{setspace}
\usepackage[authoryear]{natbib}
\PassOptionsToPackage{normalem}{ulem}
\usepackage{ulem}
\onehalfspacing

\makeatletter
\theoremstyle{plain}
\newtheorem{assumption}{\protect\assumptionname}
\theoremstyle{plain}
\newtheorem{thm}{\protect\theoremname}
\theoremstyle{plain}
\newtheorem{lem}{\protect\lemmaname}

\sloppy
\usepackage{setspace}

\makeatother

\usepackage{babel}
\providecommand{\assumptionname}{Assumption}
\providecommand{\lemmaname}{Lemma}
\providecommand{\theoremname}{Theorem}

\begin{document}
\title{Sparse network asymptotics for logistic regression}
\author{Bryan S. Graham\thanks{Department of Economics, University of California - Berkeley, 530
Evans Hall \#3380, Berkeley, CA 94720-3888, \uline{e-mail:} $\mathtt{bgraham@econ.berkeley.edu}$,
\uline{web:} $\mathtt{http://bryangraham.github.io/econometrics/}$.
Financial support from NSF Grant SES \#1851647 is gratefully acknowledged.
Some of the results contained in this paper were presented, albeit
in more basic and preliminary forms, at an invited session of the
2018 Latin American Meetings of the Econometric Society, and at a
plenary lecture of the 2019 meetings of the International Association
of of Applied Econometrics. I am thankful to Michael Jansson for several
very helpful conversations and to Konrad Menzel for feedback on the
initial draft. All the usual disclaimers apply.}}
\date{September 2020}
\maketitle
\begin{abstract}
\thispagestyle{empty} 
\end{abstract}
\pagebreak{}
\begin{abstract}
Consider a bipartite network where $N$ consumers choose to buy or
not to buy $M$ different products. This paper considers the properties
of the logistic regression of the $N\times M$ array of ``$i$-buys-$j$''
purchase decisions, $\left[Y_{ij}\right]_{1\leq i\leq N,1\leq j\leq M}$,
onto known functions of consumer and product attributes under asymptotic
sequences where (i) both $N$ and $M$ grow large and (ii) the average
number of products purchased per consumer is finite in the limit.
This latter assumption implies that the network of purchases is \emph{sparse}:
only a (very) small fraction of all possible purchases are actually
made (concordant with many real-world settings). Under sparse network
asymptotics, the first and last terms in an extended Hoeffding-type
variance decomposition of the score of the logit composite log-likelihood
are of equal order. In contrast, under dense network asymptotics,
the last term is asymptotically negligible. Asymptotic normality of
the logistic regression coefficients is shown using a martingale central
limit theorem (CLT) for triangular arrays. Unlike in the dense case,
the normality result derived here also holds under degeneracy of the
network graphon. Relatedly, when there ``happens to be'' no dyadic
dependence in the dataset in hand, it specializes to recently derived
results on the behavior of logistic regression with rare events and
iid data. Sparse network asymptotics may lead to better inference
in practice since they suggest variance estimators which (i) incorporate
additional sources of sampling variation and (ii) are valid under
varying degrees of dyadic dependence.

\smallskip{}

\begin{singlespace}
\noindent \uline{JEL Codes:} C31, C33, C35
\end{singlespace}

\smallskip{}

\begin{singlespace}
\noindent \uline{Keywords}: Networks, Exchangeable Random Arrays,
Dyadic Clustering, Sparse Networks, Logistic Regression, Rare Events,
Marginal Effects
\end{singlespace}

\thispagestyle{empty} 

\pagebreak{}

\setcounter{page}{1}
\end{abstract}
Let $i=1,\ldots,N$ index a random sample of consumers and $j=1,\ldots,M$
a random sample of products. For each consumer-product pair $ij$
we observe $Y_{ij}=1$ if consumer $i$ purchases product $j$ and
$Y_{ij}=0$ otherwise. Let $W_{i}$ be a vector of observed consumer
attributes, $X_{j}$ a vector of product attributes and $n=M+N$ the
total number of sampled consumers and products. The conditional probability
that consumer $i$ buys product $j$ is given by
\begin{equation}
\Pr\left(\left.Y_{ij}=1\right|W_{i},X_{j}\right)=e\left(\alpha_{0,n}+Z_{ij}'\beta_{0}\right),\label{eq: purchase_probability}
\end{equation}
where $Z_{ij}\overset{def}{\equiv}z\left(W_{i},X_{j}\right)$ is a
vector of known functions of $W_{i}$ and $X_{j}$, $\alpha_{0,n}$
an ``intercept'' parameter (which may vary with $n$), $\beta_{0}$
a vector of fixed ``slope'' parameters, and $e\left(\cdot\right)$
a known increasing function mapping the real line into the unit interval.
Below I will emphasize the logit case with $e\left(v\right)=\exp\left(v\right)/\left[1+\exp\left(v\right)\right]$;
this case is convenient and dominates empirical work, but nothing
which follows hinges essentially upon it.

I am interested in settings where both the number of consumers, $N$,
and the number of products, $M$, are very large. To motivate this
focus, consider a large book retailer. Such a retailer may service
many customers and also stock many books. Let $x$ be the attribute
vector associated with a newly released book, $\hat{\theta}_{n}=\left(\hat{\alpha}_{n},\hat{\beta}'\right)'$
estimates of the parameters in \eqref{eq: purchase_probability},
constructed from some training sample, and $\hat{e}_{i}\left(x\right)=e\left(\hat{\alpha}_{n}+z\left(W_{i},x\right)'\hat{\beta}\right)$
the predicted probability that agent $i$ purchases a book of type
$X_{j}=x$. With this knowledge the retailer might use
\begin{equation}
\hat{\gamma}_{n}\left(x\right)=\sum_{i=1}^{N}\hat{e}_{i}\left(x\right)\label{eq: predicted_sales}
\end{equation}
to predict total unit sales for the new book. This prediction could
be useful for making wholesale purchase decisions. Other objects of
interest include various average partial effects \citep[e.g.,][]{Chamberlain_HBE84,Wooldridge_IIEM05}.

In this paper I present a method of estimating the coefficient vector
$\theta_{0,n}=\left(\alpha_{0,n},\beta_{0}'\right)'$ as well as one
for conducting inference on it. I also explore estimation and inference
for aggregate effects, like \eqref{eq: predicted_sales}, as well
as for average effects. The econometric framework outlined below is
designed to accommodate two peculiarities of the setting described
above that appear to be important in practice and also consequential
for inference.

First, consider predicting whether randomly sampled consumer $i$
purchases book $j$, say \emph{The Clue in the Crossword Cipher},
the forty-fourth novel in the celebrated Nancy Drew mystery series.
Knowledge of the frequency with which other consumers $k=1,\ldots,i-1,i+1,\ldots,N$
purchase book $j$ will generally alter the econometrician's prediction
of whether $i$ also purchases book $j$. That is, for any $k\neq i$,
\[
\Pr\left(\left.Y_{ij}=1\right|Y_{kj}=1\right)>\Pr\left(Y_{ij}=1\right)
\]
or that $Y_{i_{1}j_{1}}$and $Y_{i_{2}j_{2}}$ will covary whenever
the two transactions correspond to a common book (such that $j_{1}=j_{2}$).

Similarly, if the econometrician knew that consumer $i$ was a frequent
book buyer, she might conclude that this consumer is also more likely
to purchase some other book (relative to the average consumer). That
is $Y_{i_{1}j_{1}}$and $Y_{i_{2}j_{2}}$ will also covary whenever
the transactions correspond to a common buyer (such that $i_{1}=i_{2}$).

Importantly, dependence across $Y_{i_{1}j_{1}}$ and $Y_{i_{2}j_{2}}$
when $\left\{ i_{1},j_{j}\right\} $ and $\left\{ i_{2},j_{2}\right\} $
share a common buyer or book index may hold even conditional on observed
consumer, $W_{i}$, and product attributes, $X_{j}$. Some consumers
may have latent attributes (i.e., not contained in $W_{i}$) which
induce them to buy many books and some books may be especially popular
(for reasons not captured adequately by $X_{j}$). It might be, for
example, that
\[
\Pr\left(\left.Y_{ij}=1\right|Y_{kj}=1,W_{i},W_{k},X_{j}\right)>\Pr\left(\left.Y_{ij}=1\right|W_{i},X_{j}\right).
\]

The structured form of dependence across the elements of $\left[Y_{ij}\right]_{1\leq i\leq N,1\leq j\leq M}$
described above is a feature of separately exchangeable random arrays
\citep{Aldous_JMA81,Hoover_WP79}. The inferential implications of
such dependence, in the context of subgraph counts, were first considered
by \citet{Holland_Leinhardt_SM76} almost fifty years ago. \citet{Bickel_et_al_AS11}
make an especially important recent contribution in this area. In
the context of regression models, the inferential implications of
dyadic dependence have been considered by, among others, \citet{Fafchamp_Gubert_JDE07},
\citet{Cameron_Miller_WP14}, \citet{Aronow_et_al_PA17}, \citet{Graham_Book_DR_Chap2020},
\citet{Davezies_et_al_AS20} and \citet{Menzel_arXiv17} (see \citet[Section 4]{Graham_HBE2020}
for a review and references). Dyadic dependence will generate distinct
issues here.

The second peculiarity explored here is suggested by the observation
that, even when presented with the opportunity to purchase many books,
the typical customer will only purchase a few. Similarly, a retailer
only sells a few copies of most titles in a given year. Put differently
personal libraries are generally small and the market share of most
books is, for all practical purposes, infinitesimally small. These
observations have implications for what types of asymptotic approximations
are likely to be useful in practice. In this paper I consider sequences
where both $N$ and $M$ grow at the same rate such that, recalling
that $n=M+N$, 
\[
M/n\rightarrow\phi\in\left(0,1\right)
\]
as $n\rightarrow\infty$. 

Let $\rho_{0,n}\overset{def}{\equiv}\mathbb{E}\left[e\left(\alpha_{0,n}+Z_{ij}'\beta_{0}\right)\right]$
be probability that a randomly sampled consumer purchases a randomly
sampled book. The average number of books purchased by the average
consumer is then
\begin{equation}
\lambda_{0,n}^{c}\overset{def}{\equiv}M\rho_{0,n}.\label{eq: average_degree}
\end{equation}
In network parlance $\lambda_{0,n}^{c}$ corresponds to average consumer
degree. If $\rho_{0,n}$ is bounded away from zero, then $\lambda_{0,n}^{c}\rightarrow\infty$
as $N,M\rightarrow\infty$. This implies that the number of actual
book purchases and the number of possible book purchases should be
of equal order. This not true in practice.\footnote{There are tens of millions of print titles available on, for example,
Amazon, even consumers who buys hundreds of books in a year are completing
only very small fraction of all possible purchases.} To develop a distribution theory which is concordant with the empirical
regularity that consumers only purchase a small number of books (and,
similarly, that retailers only sell a small number of copies of any
given title) I let $\alpha_{0,n}\rightarrow-\infty$ at a rate which
ensures that $\lambda_{0,n}^{c}$ converges to a non-zero and bounded
constant $0<\lambda_{0}^{c}<\infty$ as $N,M\rightarrow\infty$. In
language of networks I consider bi-partite graphs which are \emph{sparse}.

The asymptotic analysis in this paper is, to my knowledge, novel,
but it does connect with two important areas of prior research by
others. The first is the literature on subgraph counts and dyadic
regression cited above. However, with the partial exception of Bickel
et al.'s \citeyearpar{Bickel_et_al_AS11} analysis of acyclic subgraph
counts, this work has been, starting with \citet{Holland_Leinhardt_SM76},
limited to to dense networks.\footnote{\citet{Graham_EM17} and \citet{Jochmans_JBES18} also considered
regression in the context of graphs which are sparse in the limit.
Both these papers utilize conditional likelihood type ideas; this
has the effect of ``conditioning away'' some of the dependence which
is central to the analysis below.} The second connection is to the literature on ``rare events'' analysis
\citep[e.g.,][]{King_Zeng_PA2001}; an area of special concern in
political science and epidemology, but also increasingly relevant
in economics (especially in the era of ``Big Data''). An interesting
feature of Theorem \ref{thm: asymptotic_distribution} below is that
it contains Wang's \citeyearpar{HaiYing_ICML2020} recent result for
logistic regression with rare events and iid data as a special case.

The formal analysis of this paper is confined to bipartite networks,
but adapting it to directed and/or undirected networks would be straightforward.
Several consumer demand settings might be appropriately modeled with
the methods described in this paper. Examples include (i) the listening
behavior of streaming music service customers and (ii) the purchase
behavior of big box store customers. A limitation vis-a-vis these
applications, is that the basic set-up explored here is not useful
for understanding complementary and substitution patterns across products
\citep[cf.,][]{Lewbel_Nesheim_Cemmap2019}. Other possible applications
include modeling plant locations in an industry where firms typically
operate multiple plants (here $i=1,\ldots,N$ would index firms and
$j=1,\ldots,M$ locations; see \citet{KPMG_CAReport2016}). Other
many-to-many matching problems that have drawn economists' interest
include (i) bank-firm lending relationships \citep[e.g.,][]{Marotta_et_al_PlosOne2015},
(ii) the matching of venture capital with start-ups \citep[e.g.,][]{Bengtsson_Hsu_JBV2015},
and (iii) supply chain settings with strong bi-partite structure (e.g.,
automakers and parts supplies as in \citet{Fox_QE2018}). When $i=1,\ldots,N$
and $j=1,\ldots,M$ index the same units with $M=N$, applications
include the modeling of ``rare events'' in international relations
data, such as interstate wars \citep[e.g.,][]{King_Zeng_PA2001}.
More generally, the methods developed in this paper, with minimal
adaptation, can be used for link prediction in any sparse network
setting: bi-partite, directed or undirected.\footnote{The methods outlined here are not appropriate for use in one-to-one
matching settings.} There are numerous applications of link prediction in the other social
sciences, the bench sciences as well as in industry.

In what follows random variables are denoted by capital Roman letters,
specific realizations by lower case Roman letters and their support
by blackboard bold Roman letters. That is $Y$, $y$ and $\mathbb{Y}$
respectively denote a generic random draw of, a specific value of,
and the support of, $Y$. A ``0'' subscript on a parameter denotes
its population value and may be omitted when doing so causes no confusion.
In what follows I use graph, network and purchase graph to refer to
$\mathbf{Y}\overset{def}{\equiv}\left[Y_{ij}\right]_{1\leq i\leq N,1\leq j\leq M}$.
All graph theory terms and notation used below are standard \citep[e.g.,][]{Chartand_Zhang_GT2012}.

\section{\label{sec: population}Population and sampling assumptions}

Let $i\in\mathbb{N}$ index\emph{ consumers} in an infinite population
of interest. Associated with each consumer is the vector of observed
attributes $W_{i}\in\mathbb{W}=\left\{ w_{1},\ldots,w_{J}\right\} .$
Let $j\in\mathbb{M}$ index \emph{products} in a second infinite population
of interest. The model is a two population one \citep[cf.,][]{Graham_Imbens_Ridder_JBES18}.
Associated with each product is the vector of characteristics $X_{i}\in\mathbb{X}=\left\{ x_{1},\ldots,x_{K}\right\} $.
The finite support assumption on $\mathbb{W}$ and $\mathbb{X}$ is
not essential, but simplifies the discussion of exchangeability below. 

Let $\sigma_{w}:\mathbb{N}\rightarrow\mathbb{N}$ be a permutation
of a finite number of consumer indices which satisfies the restriction
\begin{equation}
\left[W_{\sigma_{w}\left(i\right)}\right]_{i\in\mathbb{N}}=\left[W_{i}\right]_{i\in\mathbb{N}}.\label{eq: permutation_definition}
\end{equation}
Restriction \eqref{eq: permutation_definition} implies that $\sigma_{w}$
only permutes indices across observationally identical consumers (i.e.,
with the same values of $W$). Let $\sigma_{x}:\mathbb{M}\rightarrow\mathbb{M}$
be an analogously constrained permutation of a finite number of product
indices. Adapting the terminology of \citet{Crane_Towsner_JSL18},
I assume that the purchase graph is $W$-$X$-\emph{exchangeable}
\begin{equation}
\left[Y_{\sigma_{w}\left(i\right)\sigma_{x}\left(j\right)}\right]_{i\in\mathbb{N},j\in\mathbb{M}}\overset{D}{=}\left[Y_{ij}\right]_{i\in\mathbb{N},j\in\mathbb{M}}.\label{eq: WXexchangeable}
\end{equation}
Here $\overset{D}{=}$ denotes equality of distribution. One way to
think about \eqref{eq: WXexchangeable} is as a requirement that any
probability law for $\left[Y_{ij}\right]_{i\in\mathbb{N},j\in\mathbb{M}}$
should attach equal probability to all purchase graphs which are isomorphic
as vertex-colored graphs. Here I associate $W_{i}$ and $X_{j}$ with
the color of the corresponding consumer and product vertices in the
overall purchase graph. Virtually all single-population micro-econometric
models assume that agents are exchangeable, restriction \eqref{eq: WXexchangeable}
extends this idea to the two-population setting considered here. Our
probability law for the model should not change if we re-label observationally
identical units.

\subsection*{Graphon}

It is well-known that exchangeability implies restrictions on the
structure of dependence across observations in the cross-section setting
\citep[e.g.,][]{deFinetti_AN1931}. \citet{Aldous_JMA81}, \citet{Hoover_WP79}
and \citet{Crane_Towsner_JSL18} showed that exchangeable random \emph{arrays}
also exhibit a special dependence structure. Let $\mu$, $\left\{ \left(W_{i},A_{i}\right)\right\} _{i\geq1}$,
$\left\{ \left(X_{j},B_{j}\right)\right\} _{j\geq1}$ and $\left\{ V_{ij}\right\} _{i\geq1,j\geq1}$
be sequences of i.i.d. random variables, additionally independent
of one another, and consider the purchase graph $\left[Y_{ij}^{*}\right]_{i\in\mathbb{N},j\in\mathbb{M}}$,
generated according to
\begin{equation}
Y_{ij}^{*}=h\left(\mu,W_{i},X_{j},A_{i},B_{j},V_{ij}\right)\label{eq: graphon}
\end{equation}
with $h:\left[0,1\right]\times\mathbb{W}\times\mathbb{X}\times\left[0,1\right]^{2}\rightarrow\left\{ 0,1\right\} $
a measurable function, henceforth referred to as a \emph{graphon}
(we can normalize $\mu$, $A_{i}$, $B_{j}$ and $V_{ij}$ to have
support on the unit interval, uniformly distributed, without loss
of generality). 

The results of \citet{Crane_Towsner_JSL18}, which extend the earlier
work of \citet{Aldous_JMA81} and \citet{Hoover_WP79}, show that,
for any $W$-$X$-\emph{exchangeable }random array $\left[Y_{ij}\right]_{i\in\mathbb{N},j\in\mathbb{M}}$,
there exists another array $\left[Y_{ij}^{*}\right]_{i\in\mathbb{N},j\in\mathbb{M}}$,
generated according to \eqref{eq: graphon}, such that the two arrays
have the same distribution. An implication of this result is that
we may use \eqref{eq: graphon} as a nonparametric data generating
process for $\left[Y_{ij}\right]_{i\in\mathbb{N},j\in\mathbb{M}}$. 

Inspection of \eqref{eq: graphon} indicates that exchangeability
implies a particular pattern of dependence across the elements of
$\left[Y_{ij}\right]_{i\in\mathbb{N},j\in\mathbb{M}}$. In particular
$Y_{i_{1}j_{1}}$ and $Y_{i_{2}j_{2}}$ may covary whenever $i_{1}=i_{2}$
or $j_{1}=j_{2}$; this covariance may be present even conditional
on consumer and product attributes. This is, of course, precisely
the dependence structure discussed in the introduction.

\subsection*{Sampling process}

Let $\mathbf{Y}=\left[Y_{ij}\right]_{1\leq i\leq N,1\leq j\leq M}$
be the observed $N\times M$ matrix of consumer purchase decisions.
Let $\mathbf{W}$ and $\mathbf{X}$ be the associated matrices of
consumer and product regressors. I assume that $\mathbf{Y}$ is the
adjacency matrix associated with the subgraph induced by a random
sample of consumers and products from a $W$-$X$-\emph{exchangeable
}infinite population graph. Let $G_{\infty,\infty}$ denote this population
network. Associated with this network is some graphon \eqref{eq: graphon}.
Let $\mathcal{V}_{c}$ and $\mathcal{V}_{p}$ denote the set of consumers
and products randomly sampled by the econometrician from $G_{\infty,\infty}$.
We have $\mathbf{Y}$ equal to the adjacency matrix of the network:
\begin{equation}
G_{N,M}=G_{\infty,\infty}\left[\mathcal{V}_{c},\mathcal{V}_{p}\right].\label{eq: data_as_induced_subgraph}
\end{equation}
An implication of \eqref{eq: WXexchangeable}, \eqref{eq: graphon}
and \eqref{eq: data_as_induced_subgraph} is that we may proceed `as
if' the adjacency matrix in hand was generated according to
\[
Y_{ij}=h\left(\mu,W_{i},X_{j},A_{i},B_{j},V_{ij}\right)
\]
for $i=1,\ldots,N$ and $j=1,\ldots,M$. The marginal probability
of the event, random consumer $i$, purchases random product $j$,
is thus
\begin{equation}
\rho_{0}=\mathbb{E}\left[h\left(\mu,W_{i},X_{j},A_{i},B_{j},V_{ij}\right)\right].\label{eq: density}
\end{equation}

Let $\left\{ G_{N,M}\right\} $ be a sequence of networks indexed
by, respectively, the cardinality of the sampled consumer and product
index sets, $N=\left|\mathcal{V}_{c}\right|$ and $M=\left|\mathcal{V}_{p}\right|$.
The average number of products purchased per consumer, or \emph{average
consumer degree},
\begin{equation}
\lambda_{0}^{c}=M\rho_{0}\label{eq: avg_consumer_degree}
\end{equation}
will diverge as $M\rightarrow\infty$ when $\rho_{0}>0$. Likewise
the average number of times a given product is purchased, or \emph{average
product degree},
\begin{equation}
\lambda_{0}^{p}=N\rho_{0}\label{eq: avg_product_degree}
\end{equation}
will also diverge as $N\rightarrow\infty$. A consequence of this
divergence is that the number of possible purchases, and the number
of actual purchases, will be of equal order. In practice, however,
only a small fraction of all possible purchases are made. To capture
this feature of the real world in our asymptotic approximations requires
a slightly more elaborate thought experiment; which I outline next.

Instead of considering a sequence of graphs sampled from a \emph{fixed
}population, I consider a sequence of graphs sampled from a corresponding
\emph{sequence }of populations. The sequence of networks $\left\{ G_{N,M}\right\} $
is one where both $N$ and $M$ grow at the same rate such that, recalling
that $n=M+N$, 
\[
M/n\rightarrow\phi\in\left(0,1\right)
\]
as $n\rightarrow\infty$. For each $N,M$ the graphon describing the
infinite population sampled from is
\begin{equation}
Y_{ij}=h_{N,M}\left(\mu,W_{i},X_{j},A_{i},B_{j},V_{ij}\right).\label{eq: graphon_N_M}
\end{equation}
 This sequence of graphons/populations $\left\{ h_{N,M}\right\} $
has the property that network \emph{density} 
\[
\rho_{0,N,M}=\mathbb{E}_{N,M}\left[h_{N,M}\left(\mu,W_{i},X_{j},A_{i},B_{j},V_{ij}\right)\right]
\]
may approach zero as $n\rightarrow\infty$. Under this setup the order
of $\lambda_{0,N,M}^{c}=M\rho_{0,N,M}$ and $\lambda_{0,N,M}^{p}=N\rho_{0,N,M}$
will depend upon the speed with which $\rho_{0,N,M}$ approaches zero
as $n\rightarrow\infty$. Here I use the notation $\mathbb{E}_{N,M}\left[\cdot\right]$
to emphasize that the probability law used to compute expectations
may vary with the sample size.

As in other exercises in alternative asymptotics, indexing the population
data generating process by the sample size is not meant to capture
a literal feature of how the data are generated, rather it is done
so that the limiting properties of the model share important features
-- in this case sparseness -- with the actual finite network in
hand. In other settings such an approach has led to more useful asymptotic
approximations, a premise I maintain here \citep[e.g.,][]{Staiger_Stock_Em1997}.

\section{\label{sec: logit}Composite likelihood estimator}

The estimation target is the regression function of $Y_{ij}$ given
$X_{i}$ and $W_{j}$. This is a predictive function and may, or may
not, have structural economic meaning as well (see \citet[Sections 4-5]{Graham_HBE2020}).
I assume that this regression function takes the parametric form
\begin{equation}
e\left(\alpha_{0,n}+Z_{ij}'\beta_{0}\right)=\frac{\exp\left(\alpha_{0,n}+Z_{ij}'\beta_{0}\right)}{1+\exp\left(\alpha_{0,n}+Z_{ij}'\beta_{0}\right)}\label{eq: marginal_cond_prob}
\end{equation}
where $Z_{ij}\overset{def}{\equiv}z\left(W_{i},X_{j}\right)$ is a
finite vector of known functions of $W_{i}$ and $X_{j}$. It would
be interesting to extend what follows to semiparametric regression
models, but this is not done here.

Assumption \ref{ass: sampling} formalizes the population and sampling
set-up of the previous section.
\begin{assumption}
\label{ass: sampling}\textsc{(Sampling) }\textup{\emph{The sampled
network is the one induced by a random sample of $N$ consumers and
$M$ products drawn from the nodes of the infinite}}\emph{ $W$-$X$-exchangeable
bipartite random array $\left[Y_{ij}\right]_{i\in\mathbb{N},j\in\mathbb{M}}$}\textup{\emph{
with graphon \eqref{eq: graphon_N_M}}}\emph{;} $N$ and $M$ grow
such that, for $n=M+N,$ 
\[
M/n\rightarrow\phi\in\left(0,1\right)
\]
as $n\rightarrow\infty$. 
\end{assumption}
To allow the probability of making a purchase decline with $n$, let
$\alpha_{0,n}=\ln\left(\alpha_{0}/n\right).$ This gives, after some
manipulation,
\begin{align}
e\left(\alpha_{0,n}+Z_{ij}'\beta_{0}\right) & =\frac{\frac{\alpha_{0}}{n}\exp\left(Z_{ij}'\beta_{0}\right)}{1+\frac{\alpha_{0}}{n}\exp\left(Z_{ij}'\beta_{0}\right)}\label{eq: sparse_prob}
\end{align}
and hence an expression for average consumer degree \eqref{eq: average_degree}
of, recalling that $M/n\approx\phi$, 
\begin{align*}
\lambda_{0,n}^{c} & =\alpha_{0}\phi\mathbb{E}\left[\exp\left(Z_{ij}'\beta_{0}\right)\right]+O\left(n^{-1}\right)
\end{align*}
which converges to a bounded constant as $n\rightarrow\infty$ as
long as $\mathbb{E}\left[\exp\left(Z_{ij}'\beta_{0}\right)\right]<\infty$.
By allowing $\alpha_{0,n}\rightarrow-\infty$ as $n\rightarrow\infty$
we ensure that, in the limit, the bipartite graph $\mathbf{Y}=\left[Y_{ij}\right]$
is sparse. Similar devices are used by \citet{Owen_JMLR2007} and
\citet{HaiYing_ICML2020} to model ``rare events'' in cross-sectional
binary outcome data.
\begin{assumption}
\label{ass: logit_regression}\textsc{(Logit Regression Function)
}\textup{\emph{The mean regression function (CEF) $\mathbb{E}_{N,M}\left[\left.Y_{ij}\right|W_{i},X_{j}\right]$
belongs to the parametric family \eqref{eq: marginal_cond_prob} with
$\alpha_{n}=\ln\left(\alpha/n\right)$, $\theta=\left(\alpha,\beta'\right)'\in\mathbb{A}\times\mathbb{B}=\Theta$,
$\mathbb{A}$ and $\mathbb{B}$ compact, and }}\emph{$Z_{ij}\in\mathbb{Z}$}
with $\mathbb{\mathbb{Z}}$ a compact subset of $\mathbb{R}^{\dim\left(\mathbb{Z}_{ij}\right)}$.
The true parameter $\theta_{0}=\left(\alpha_{0},\beta_{0}'\right)'$
lies in the interior of the parameter space.
\end{assumption}
The compact support assumption on $Z_{ij}$ is not essential, but
simplifies the proofs. Here I focus on estimation of $\theta_{n}=$$\left(\alpha_{n},\beta'\right)'$
with $\alpha_{n}=\ln\left(\alpha/N\right)$. Observe that $\theta=\left(\alpha,\beta'\right)'$
does not vary with $n$, while $\theta_{n}$ does. Define $\theta_{0,n}=\left(\alpha_{0,n},\beta_{0}'\right)'$
with $\alpha_{0,n}=\ln\left(\alpha_{0}/n\right)$. Let $\bar{\alpha}=\sup\mathbb{A}$,
the parameter space for $\theta_{n}$, $\Theta_{n}=\left(-\infty,\ln\left(\bar{\alpha}\right)\right]\times\mathbb{B}$,
is the one induced by $\Theta=\mathbb{A}\times\mathbb{B}$ and the
mapping from $\theta$ to $\theta_{n}$. 

To estimate $\theta_{0,n}$we maximize the composite log-likelihood
function
\[
\hat{\theta}_{n}=\arg\underset{\theta\in\Theta_{n}}{\max}\thinspace L_{n}\left(\theta\right)
\]
with $L_{n}\left(\theta\right)\overset{def}{\equiv}\frac{1}{NM}\sum_{i=1}^{N}\sum_{j=1}^{M}l_{ij}\left(\theta\right)$
and 
\[
l_{ij}\left(\theta\right)=\left(2Y_{ij}-1\right)R_{ij}'\theta-\ln\left[1+\exp\left[\left(2Y_{ij}-1\right)R_{ij}'\theta\right]\right]
\]
the logit kernel function and $R_{ij}=\left(1,Z_{ij}'\right)'$. Observe
that $\hat{\theta}_{n}$ is simply the coefficient vector associated
with a logistic regression of $Y_{ij}$ onto a constant and $Z_{ij}$
using all $NM$ dyads in the network. Although, by virtue of Assumption
\ref{ass: logit_regression} above, $L_{n}\left(\theta\right)$ correctly
represents the marginal (conditional) probability of $Y_{ij}$ for
each element of $\mathbf{Y}$, it does not accurately reflect the
dependence structure across these elements; hence the term ``composite
likelihood''. See \citet{Lindsay_CM88} an introduction to estimation
by composite likelihood and \citet{Graham_HBE2020} for discussion
in the contexts of network model estimation.

\section{Sparse network asymptotics}

If $\alpha_{0,n}$ equals a fixed constant, then $\rho_{0,n}$ - network
density -- will also be fixed such that the network will be dense
in the limit. The limit distribution of $\hat{\theta}_{n}$ under
such ``dense network asymptotics'' was derived by \citet{Graham_HBE2020}.
More general results for dyadic M-estimators under dense network asymptotics,
including results on the bootstrap, can be found in \citet{Menzel_arXiv17}
and \citet{Davezies_et_al_AS20}. None of these results apply here.
To derive a result that does apply, begin with the mean value expansion
\[
\sqrt{n}\left(\hat{\theta}_{n}-\theta_{0,n}\right)=\left[nH_{n}\left(\bar{\theta}_{n}\right)\right]^{+}\times n^{3/2}S_{n}\left(\theta_{0,n}\right).
\]
where

\begin{equation}
S_{n}\left(\theta\right)=\frac{1}{NM}\sum_{i=1}^{N}\sum_{j=1}^{M}s_{ij}\left(\theta\right),\label{eq: score}
\end{equation}
with $s_{ij}\left(\theta\right)=\frac{\partial l_{ij}\left(\theta\right)}{\partial\theta}=\left(Y_{ij}-e_{ij}\left(\theta\right)\right)R_{ij}$
and $e_{ij}\left(\theta\right)=e\left(\alpha+Z_{ij}'\beta\right)$,
corresponds to the score vector of the composite likelihood and
\begin{equation}
H_{n}\left(\theta\right)=\frac{1}{NM}\sum_{i=1}^{N}\sum_{j=1}^{M}\frac{\partial^{2}l_{ij}\left(\theta\right)}{\partial\theta\partial\theta'}\label{eq: hessian}
\end{equation}
the associated Hessian matrix. Here $\bar{\theta}_{n}$ is a mean
value between $\theta_{0,n}$ and $\hat{\theta}_{n}$ which may vary
from row to row. 

Lemma \ref{lem: Hessian_Convergence}, stated and proved in Appendix
\ref{app: lemmas}, shows that, after re-scaling by $n$, that $nH_{n}\left(\theta\right)$
converges uniformly to 

\begin{equation}
\Gamma\left(\theta\right)=-\alpha\mathbb{E}\left[\exp\left(Z_{12}'\beta\right)\left(\begin{array}{cc}
1 & Z_{12}'\\
Z_{12} & Z_{12}Z_{12}'
\end{array}\right)\right].\label{eq: GAMMA}
\end{equation}
An intuition for why $H_{n}\left(\theta\right)$ needs to be rescaled
to ensure convergence is that, under sparse network asymptotics, information
accrues at a slower rate: the effective sample size is not $NM=\left(n^{2}\right)$,
but rather $O\left(n\right)$. I return to this point briefly at the
end of the paper.
\begin{assumption}
\label{ass: identification}\textsc{(Identification) }The matrix $\Gamma_{0}\overset{def}{\equiv}\Gamma\left(\theta_{0}\right)$
is of full rank.
\end{assumption}
Assumption \ref{ass: identification} is a standard identification
condition (see, for example, \citet[p. 270]{Amemiya_AE85}). This
assumption, in conjunction with Lemma \ref{lem: Hessian_Convergence},
gives the linear approximation
\[
\sqrt{n}\left(\hat{\theta}_{n}-\theta_{n}\right)=\Gamma_{0}^{-1}\times n^{3/2}S_{n}\left(\theta_{0,n}\right)+o_{p}\left(1\right).
\]
To derive the limit distribution of $\sqrt{n}\left(\hat{\theta}_{n}-\theta_{n}\right)$
I show that the distribution $n^{3/2}S_{n}\left(\theta_{0,n}\right)$
is well-approximated by a Gaussian random variable. The main tool
used is a martingale CLT for triangular arrays. That the variance
stabilizing rate for $S_{n}\left(\theta_{0,n}\right)$ is $n^{3/2}$,
like the need to rescale the Hessian, is non-standard. The need to
``blow up'' $S_{n}\left(\theta_{0,n}\right)$ at a faster than $\sqrt{n}$
rate is a consequence of the fact that the summands in $S_{n}\left(\theta_{0,n}\right)$
are $O\left(n^{-1}\right)$ since $\alpha_{0,n}\rightarrow-\infty$
as $n\rightarrow\infty$.

A detailed proof of Theorem \ref{thm: asymptotic_distribution}, stated
below, is provided in Appendix \ref{app: proof_of_main_result}. Here
I outline the main arguments. Begin with the following three part
decomposition of the score vector
\begin{align}
S_{n}\left(\theta\right) & =U_{1n}\left(\theta\right)+U_{2n}\left(\theta\right)+V_{n}\left(\theta\right)\label{eq: score_decomposition}
\end{align}
where
\begin{align}
U_{1n}\left(\theta\right)= & \frac{1}{N}\sum_{i=1}^{N}\bar{s}_{1i}^{c}\left(\theta\right)+\frac{1}{M}\sum_{j=1}^{M}\bar{s}_{1j}^{p}\left(\theta\right)\label{eq: U1n}\\
U_{2n}\left(\theta\right)= & \frac{1}{NM}\sum_{i=1}^{N}\sum_{j=1}^{M}\left\{ \bar{s}_{ij}\left(\theta\right)-\bar{s}_{1i}^{c}\left(\theta\right)-\bar{s}_{1j}^{p}\left(\theta\right)\right\} \label{eq: U2n}\\
V_{n}\left(\theta\right)= & \frac{1}{NM}\sum_{i=1}^{N}\sum_{j=1}^{M}\left\{ s_{ij}\left(\theta\right)-\bar{s}_{ij}\left(\theta\right)\right\} \label{eq: Vn}
\end{align}
with $\bar{s}_{ij}\left(\theta\right)=\bar{s}\left(W_{i},X_{j},A_{i},B_{j};\theta\right)$
with $\bar{s}\left(w,x,a,b;\theta\right)=\mathbb{E}\left[\left.s_{ij}\left(\theta\right)\right|W_{i}=w,X_{j}=x,A_{i}=a,B_{j}=b\right]$
and
\begin{align*}
\bar{s}_{1i}^{c}\left(\theta\right)= & \bar{s}_{1}^{c}\left(W_{i},A_{i};\theta\right)\\
\bar{s}_{1j}^{p}\left(\theta\right)= & \bar{s}_{1}^{p}\left(X_{j},B_{j};\theta\right)
\end{align*}
with $\bar{s}_{1}^{c}\left(w,a;\theta\right)=\mathbb{E}\left[\bar{s}\left(w,X_{j},a,B_{j};\theta\right)\right]$
and $\bar{s}_{1}^{p}\left(x,b;\theta\right)=\mathbb{E}\left[\bar{s}\left(W_{i},x,A_{i},b;\theta\right)\right]$.

Decomposition \eqref{eq: score_decomposition} also features in \citet{Graham_Book_DR_Chap2020}
and \citet{Menzel_arXiv17}.\footnote{It is also implicit in the elegant proof in \citet{Bickel_et_al_AS11}.}
It can be derived by first projecting $S_{n}\left(\theta\right)$
on to $\mathbf{A}=\left[A_{i}\right]_{1\leq i\leq N}$, $\mathbf{W}=\left[W_{i}\right]_{1\leq i\leq N}$,
$\mathbf{B}=\left[B_{j}\right]_{1\leq j\leq M}$, and $\mathbf{X}=\left[X_{i}\right]_{1\leq j\leq N}$
as follows:
\begin{align}
S_{n}\left(\theta\right) & =\mathbb{E}\left[\left.S_{n}\left(\theta\right)\right|\mathbf{W},\mathbf{X},\mathbf{A},\mathbf{B}\right]+\left\{ S_{n}\left(\theta\right)-\mathbb{E}\left[\left.S_{n}\left(\theta\right)\right|\mathbf{W},\mathbf{X},\mathbf{A},\mathbf{B}\right]\right\} \nonumber \\
 & =\frac{1}{NM}\sum_{i=1}^{N}\sum_{j=1}^{M}\bar{s}_{ij}\left(\theta\right)+\frac{1}{NM}\sum_{i=1}^{N}\sum_{j=1}^{M}\left\{ s_{ij}\left(\theta\right)-\bar{s}_{ij}\left(\theta\right)\right\} .\label{eq: first_decomposition}
\end{align}
Next observe that $\frac{1}{NM}\sum_{i=1}^{N}\sum_{j=1}^{M}\bar{s}_{ij}\left(\theta\right)$
is a two sample U-Statistic, albeit one defined partially in terms
of the latent variables $A_{i}$ and $B_{j}$. Equation \eqref{eq: U1n}
corresponds to the Hajek Projection of this U-statistic onto (separately)
$\left\{ \left(W_{i}',A_{i}\right)\right\} _{i=1}^{N}$ and $\left\{ \left(X_{j}',B_{j}\right)\right\} _{j=1}^{M}$.
Equation \eqref{eq: U2n} is the usual Hajek Projection error term. 

Define $\phi_{n}=M/n$, $\bar{s}_{1ni}^{c}\overset{def}{\equiv}\bar{s}_{1i}^{c}\left(\theta_{0,n}\right),$
$\bar{s}_{1nj}^{p}\overset{def}{\equiv}\bar{s}_{1j}^{p}\left(\theta_{0,n}\right)$
and also $\bar{s}_{nij}\overset{def}{\equiv}\bar{s}_{ij}\left(\theta_{0,n}\right)$.
Similarly let $S_{n}=S_{n}\left(\theta_{0,n}\right)$ and so on. Applying
the variance operator to $S_{n}$ yields:
\begin{align}
\mathbb{V}\left(S_{n}\right)= & \mathbb{V}\left(U_{1n}\right)+\mathbb{V}\left(U_{2n}\right)+\mathbb{V}\left(V_{n}\right)\label{eq: variance_of_score}\\
= & \frac{\Sigma_{1n}^{c}}{N}+\frac{\Sigma_{1n}^{p}}{M}+\frac{1}{NM}\left[\Sigma_{2n}-\Sigma_{1n}^{c}-\Sigma_{1n}^{p}\right]+\frac{\Sigma_{3n}}{NM}\nonumber 
\end{align}
where
\begin{align}
\Sigma_{1n}^{c} & =\mathbb{E}\left[\bar{s}_{1ni}^{c}\left(\bar{s}_{1ni}^{c}\right)'\right]\thinspace\thinspace\thinspace\Sigma_{1n}^{p}=\mathbb{E}\left[\bar{s}_{1nj}^{p}\left(\bar{s}_{1nj}^{p}\right)'\right]\label{eq: SIGMA_definitions}\\
\Sigma_{2n} & =\mathbb{E}\left[\bar{s}_{nij}\bar{s}_{nij}'\right]=\mathbb{V}\left(\mathbb{E}\left[\left.s_{nij}\right|W_{i},X_{j},A_{i},B_{j}\right]\right)\nonumber \\
\Sigma_{3n} & =\mathbb{E}\left[\left\{ s_{nij}-\bar{s}_{nij}\right\} \left\{ s_{nij}-\bar{s}_{nij}\right\} '\right]=\mathbb{E}\left[\mathbb{V}\left(\left.s_{nij}\right|W_{i},X_{j},A_{i},B_{j}\right)\right].\nonumber 
\end{align}

In the \emph{dense} case $\Sigma_{1n}^{c}$, $\Sigma_{1n}^{p},$ $\Sigma_{2n}$
and $\Sigma_{3n}$ are all constant in $n$; hence the asymptotic
properties of $S_{n}$ coincide with those of $U_{1n}$. Since $U_{1n}$
is a sum of independent random variables a standard argument gives
\begin{equation}
n^{1/2}S_{n}\overset{D}{\rightarrow}\mathcal{N}\left(0,\frac{\Sigma_{1}^{c}}{1-\phi}+\frac{\Sigma{}_{1}^{p}}{\phi}\right)\label{eq: asymptotic_distribution_dense}
\end{equation}
as long as $\Sigma_{1}^{c}$ and/or $\Sigma_{1}^{p}$ are non-zero.

Under the sparse network asymptotics considered here the order of
$\Sigma_{1n}^{c}$, $\Sigma_{1n}^{p},$ $\Sigma_{2n}$ and $\Sigma_{3n}$
varies with $n$. This affects the order of the four variance terms
in \eqref{eq: variance_of_score} and, consequently, which components
of $S_{n}$ contribute to its asymptotic properties. In Appendix \ref{app: lemmas}
I show the order of the four terms in \eqref{eq: variance_of_score}
are, respectively,
\begin{align*}
\mathbb{V}\left(S_{n}\right)= & O\left(\frac{\rho_{n}^{2}}{N}\right)+O\left(\frac{\rho_{n}^{2}}{M}\right)+O\left(\frac{\rho_{n}^{2}}{MN}\right)+O\left(\frac{\rho_{n}}{MN}\right)\\
= & O\left(\left[\frac{\lambda_{0,n}^{c}}{\phi_{n}}\right]^{2}\frac{1}{\left(1-\phi_{n}\right)}\frac{1}{n^{3}}\right)+O\left(\left[\frac{\lambda_{0,n}^{c}}{\phi_{n}}\right]^{3}\frac{1}{n^{3}}\right)\\
 & +O\left(\left[\frac{\lambda_{0,n}^{c}}{\phi_{n}}\right]^{2}\frac{1}{\phi_{n}\left(1-\phi_{n}\right)}\frac{1}{n^{4}}\right)+O\left(\frac{\lambda_{0,n}^{c}}{\phi_{n}^{2}\left(1-\phi_{n}\right)}\frac{1}{n^{3}}\right).
\end{align*}
Since $\Sigma_{1}^{c}$ and $\Sigma_{1}^{p}$ are both $O\left(\rho_{n}^{2}\right)=O\left(n^{-2}\right)$
we can multiply them by $n^{2}$ to stabilize them. Define $\tilde{\Sigma}_{1}^{c}$
to be the limit of $n^{2}\Sigma_{1n}$ and $\tilde{\Sigma}_{1}^{p}$
to be the limit of $n^{2}\Sigma_{1n}^{p}$. Similarly we can define
$\tilde{\Sigma}_{3}$ to be the limit of $n\Sigma_{3n}$, all as $n\rightarrow\infty$.
Normalizing \eqref{eq: variance_of_score} by $n^{3/2}$ therefore
gives
\begin{align}
\mathbb{V}\left(n^{3/2}S_{n}\right)= & \frac{\tilde{\Sigma}_{1}^{c}}{1-\phi}+\frac{\tilde{\Sigma}_{1}^{p}}{\phi}+\frac{\tilde{\Sigma}_{3}}{\phi\left(1-\phi\right)}+O\left(n^{-1}\right)\label{eq: asymptotic_variance_of_score}
\end{align}
where I also use the fact that $\Sigma_{2n}=O\left(n^{-2}\right)$.

Under sparse network asymptotics both $U_{1n}$ and $V_{n}$ matter.
In Appendix \ref{app: proof_of_main_result} I show that $U_{1n}+V_{n}$
is a martingale difference sequence (MDS) to which a martingale CLT
can be applied; Theorem \ref{thm: asymptotic_distribution} then follows.
\begin{thm}
\label{thm: asymptotic_distribution}Under Assumptions \ref{ass: sampling},
\ref{ass: logit_regression} and \ref{ass: identification}
\[
\sqrt{n}\left(\hat{\theta}_{n}-\theta_{n}\right)\overset{D}{\rightarrow}\mathcal{N}\left(0,\Gamma_{0}^{-1}\left[\frac{\tilde{\Sigma}_{1}^{c}}{1-\phi}+\frac{\tilde{\Sigma}_{1}^{p}}{\phi}+\frac{\tilde{\Sigma}_{3}}{\phi\left(1-\phi\right)}\right]\Gamma_{0}^{-1}\right)
\]
as $n\rightarrow\infty$.
\end{thm}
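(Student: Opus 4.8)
The plan is to leverage the linear approximation $\sqrt{n}\left(\hat{\theta}_{n}-\theta_{n}\right)=\Gamma_{0}^{-1}n^{3/2}S_{n}\left(\theta_{0,n}\right)+o_{p}\left(1\right)$ already obtained from the mean-value expansion, Lemma \ref{lem: Hessian_Convergence}, and Assumption \ref{ass: identification}. Everything then reduces to proving $n^{3/2}S_{n}\left(\theta_{0,n}\right)\overset{D}{\rightarrow}\mathcal{N}\left(0,\Omega\right)$ with $\Omega=\tilde{\Sigma}_{1}^{c}/\left(1-\phi\right)+\tilde{\Sigma}_{1}^{p}/\phi+\tilde{\Sigma}_{3}/\left[\phi\left(1-\phi\right)\right]$, after which Slutsky's theorem and continuity of matrix inversion at the full-rank $\Gamma_{0}$ deliver the stated sandwich variance. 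Using decomposition \eqref{eq: score_decomposition} and the variance accounting \eqref{eq: variance_of_score}, the Hajek-projection error is negligible: $\mathbb{V}\left(n^{3/2}U_{2n}\right)=\frac{n^{3}}{NM}\left[\Sigma_{2n}-\Sigma_{1n}^{c}-\Sigma_{1n}^{p}\right]=O\left(n^{-1}\right)$ because $\Sigma_{2n}=O\left(n^{-2}\right)$, so Chebyshev gives $n^{3/2}U_{2n}=o_{p}\left(1\right)$. It thus suffices to establish asymptotic normality of $n^{3/2}\left(U_{1n}+V_{n}\right)$ with variance $\Omega$, consistent with \eqref{eq: asymptotic_variance_of_score}.

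I would then exhibit $n^{3/2}\left(U_{1n}+V_{n}\right)$ as a martingale. Fix an ordering of the $n=N+M$ sampled nodes, interleaving consumers and products, and let $\mathcal{F}_{n,k}$ be generated by the attributes and latent types of the first $k$ nodes together with all idiosyncratic draws $V_{ij}$ internal to that set. As each node enters I attach to it the increment $\xi_{n,k}$ comprising its own Hajek term ($N^{-1}\bar{s}_{1ni}^{c}$ for a consumer, $M^{-1}\bar{s}_{1nj}^{p}$ for a product) together with the dyadic residuals $\left(NM\right)^{-1}\left(s_{nij}-\bar{s}_{nij}\right)$ formed with every previously-entered node of the opposite type. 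Charging each dyad to the later of its two endpoints counts every dyad once and keeps each increment adapted. Since the Hajek terms are centered at $\theta_{0,n}$ (the marginal CEF being correct under Assumption \ref{ass: logit_regression}) and $s_{nij}-\bar{s}_{nij}$ has mean zero given $\left(W_{i},X_{j},A_{i},B_{j}\right)\subset\mathcal{F}_{n,k-1}$, we have $\mathbb{E}\left[\xi_{n,k}\mid\mathcal{F}_{n,k-1}\right]=0$, so $\left\{ n^{3/2}\xi_{n,k}\right\}$ is a martingale difference array with $\sum_{k}n^{3/2}\xi_{n,k}=n^{3/2}\left(U_{1n}+V_{n}\right)$.

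A martingale CLT for triangular arrays (e.g. Hall and Heyde) then applies once two conditions hold. For the conditional variance $\sum_{k}\mathbb{E}\left[n^{3}\xi_{n,k}\xi_{n,k}'\mid\mathcal{F}_{n,k-1}\right]\overset{p}{\rightarrow}\Omega$, the node contributions are empirical averages of $n^{2}\bar{s}_{1ni}^{c}\left(\bar{s}_{1ni}^{c}\right)'$ and $n^{2}\bar{s}_{1nj}^{p}\left(\bar{s}_{1nj}^{p}\right)'$ weighted by $n/N\rightarrow1/\left(1-\phi\right)$ and $n/M\rightarrow1/\phi$, while the dyadic contributions average $n\,\mathbb{V}\left(s_{nij}\mid W,X,A,B\right)$ weighted by $n^{2}/\left(NM\right)\rightarrow1/\left[\phi\left(1-\phi\right)\right]$; a weak law of large numbers — the summands being uncorrelated with bounded moments by the compact support in Assumption \ref{ass: logit_regression} — sends these to $\tilde{\Sigma}_{1}^{c}/\left(1-\phi\right)$, $\tilde{\Sigma}_{1}^{p}/\phi$ and $\tilde{\Sigma}_{3}/\left[\phi\left(1-\phi\right)\right]$, reproducing \eqref{eq: asymptotic_variance_of_score}. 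For the negligibility (conditional Lindeberg) condition I would verify the Lyapunov bound $\sum_{k}\mathbb{E}\left[\left\Vert n^{3/2}\xi_{n,k}\right\Vert ^{4}\right]\rightarrow0$.

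The hard part is this Lyapunov bound in the sparse regime. Each scaled increment is only $O_{p}\left(n^{-1/2}\right)$, but the dyadic residual $s_{nij}-\bar{s}_{nij}$ is $O\left(1\right)$ exclusively on the rare event $Y_{ij}=1$ of probability $\rho_{n}=O\left(n^{-1}\right)$, so its fourth moment is $O\left(\rho_{n}\right)$, not $O\left(\rho_{n}^{2}\right)$. Expanding the fourth moment of a node's aggregated dyadic increment $\left(NM\right)^{-1}\sum_{j}\left(s_{nij}-\bar{s}_{nij}\right)$ yields a diagonal ``kurtosis'' piece of order $M\rho_{n}$ and a Gaussian piece of order $\left(M\rho_{n}\right)^{2}$; after the $n^{6}$ scaling, the factor $\left(NM\right)^{-4}$, and summing the $O\left(n\right)$ node increments, both are $O\left(n^{-1}\right)\rightarrow0$. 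Showing the rare-events kurtosis is genuinely subdominant under the $n^{3/2}$ normalization (it would not be under the dense-case $\sqrt{n}$ rate) is the crux and is precisely what dictates the faster variance-stabilizing rate flagged before the theorem. With both conditions verified, the martingale CLT gives $n^{3/2}\left(U_{1n}+V_{n}\right)\overset{D}{\rightarrow}\mathcal{N}\left(0,\Omega\right)$; adding $n^{3/2}U_{2n}=o_{p}\left(1\right)$ and premultiplying by $\Gamma_{0}^{-1}$ completes the proof.
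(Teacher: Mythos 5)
Your proposal is correct and follows essentially the same route as the paper: the same linearization via Lemma \ref{lem: Hessian_Convergence} and Assumption \ref{ass: identification}, the same three-part decomposition \eqref{eq: score_decomposition} with $U_{2n}$ eliminated by the variance-order argument $\Sigma_{2n}=O\left(\rho_{n}^{2}\right)$, and a martingale CLT for triangular arrays applied to $n^{3/2}\left(U_{1n}+V_{n}\right)$. The only substantive difference is the construction of the martingale difference array --- the paper orders the $T=N+M+NM$ terms in three consecutive blocks (all consumer Hajek terms, then all product Hajek terms, then all dyad residuals) and verifies White's Lyapunov condition with $r=3$ plus a separate stability condition, whereas you use a node-arrival filtration charging each dyad to its later endpoint and verify a Hall--Heyde-type conditional-variance condition plus an $r=4$ Lyapunov bound; both orderings are valid, and your fourth-moment accounting of the rare-event ``kurtosis'' term (of order $M\rho_{n}$ rather than $M^{2}\rho_{n}^{2}$) correctly isolates why the $n^{3/2}$ normalization is exactly what makes the negligibility condition go through in the sparse regime.
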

Theorem \ref{thm: asymptotic_distribution} indicates that under sparse
network asymptotics there are additional sources of sampling variation
in $\sqrt{n}\left(\hat{\theta}_{n}-\theta_{n}\right)$ relative to
those which appear in the dense case. Not incorporating these into
inference procedures will lead to tests with incorrect size and/or
confidence intervals with incorrect coverage. A further advantage
of considering sparse network asymptotics is that Theorem \ref{thm: asymptotic_distribution}
remains valid even under degeneracy of the graphon, $h_{N,M}\left(\mu,W_{i},X_{j},A_{i},B_{j},V_{ij}\right)$.
For example, if the graphon is constant in $A_{i}$ and $B_{j}$ such
that $Y_{ij}$and $Y_{ik}$ do not covary conditional on covariates
(and likewise for $Y_{ji}$ and $Y_{ki}$), then $\tilde{\Sigma}_{1}^{c}=\tilde{\Sigma}_{1}^{p}=0$,
but Theorem \ref{thm: asymptotic_distribution} nevertheless remains
valid. In contrast, under dense network asymptotics, degeneracy --
as elegantly shown by \citet{Menzel_arXiv17} -- generates additional
complications. In that case the variance of $U_{1n}$ is identically
equal to zero, while that of $U_{2n}$ and $V_{n}$ are of equal order.
In some cases, the behavior of $U_{2n}$ may even induce a non-Gaussian
limit distribution (see \citet{vanderVaart_ASBook00}). In the sparse
network cases, $U_{2n}$ is always negligible relative to $V_{n}$.
Furthermore $V_{n}$ is well approximated -- after suitable scaling
-- by a Gaussian distribution.

\section{Extensions and discussion}

In this section I connect Theorem \ref{thm: asymptotic_distribution}
to prior work on rare events logistic analysis, sketch some results
about the estimation of aggregate and average effects, and, finally,
close with a few ideas about possible areas of additional research.

\subsection*{Rare events with iid data}

\citet{King_Zeng_PA2001} discuss, with a focus on finite sample bias,
the behavior of logistic regression under ``rare events'' with iid
data. Evidently binary choice analyses where the marginal frequency
of positive events is quite small are common in empirical work.\footnote{The \citet{King_Zeng_PA2001} has close to four thousand citations
on Google Scholar.} The properties of logistic regression under sequences where the number
of ``events'' becomes small (i.e., ``rare'') relative to the sample
size as it grows were recently characterized by \citet{HaiYing_ICML2020}
(see also \citet{Owen_JMLR2007}). The main result in \citet{HaiYing_ICML2020}
coincides with a special case Theorem \ref{thm: asymptotic_distribution}
above. To see this observe that if the graphon is constant in $A_{i}$
and $B_{j}$, then $\bar{s}_{nij}$ will be identically equal to zero
for all $1\leq i\leq N$ and $1\leq j\leq M$. In this scenario there
is no ``dyadic dependence'' (after conditioning on $W_{i}$ and
$X_{j}$) and $\tilde{\Sigma}_{1}^{c}=\tilde{\Sigma}_{1}^{p}=0$.
Inspection of the calculations in Appendix \ref{app: lemmas} also
reveals that, in this case, we further have an information-matrix
type equality of $\tilde{\Sigma}_{3}=\Gamma_{0}$. Under these conditions
Theorem 1 specializes to
\[
\sqrt{n}\left(\hat{\theta}-\theta_{n}\right)\overset{D}{\rightarrow}\mathcal{N}\left(0,\Gamma_{0}^{-1}\right),
\]
as $n\rightarrow\infty$. This is precisely, up to some small differences
in notation, the result given in Theorem 1 of \citet{HaiYing_ICML2020}.\footnote{\citet{HaiYing_ICML2020} scales by the square root of the number
of events or ``ones'' in the dataset. This is, of course, of the
same order as $n$ as defined here. This difference leads to a minor
difference in our two expressions for $\Gamma_{0}$. Making these
adjustments the results coincide.}

In his analysis \citet{HaiYing_ICML2020} emphasizes that information
accumulates more slowly under ``rare event asymptotics''. In the
present setting this is reflected in the need to rescale the Hessian
matrix by $n$ to achieve convergence (see Lemma \ref{lem: Hessian_Convergence}
in Appendix \ref{app: lemmas}). In the network setting dyadic dependence
additionally slows down the rate of convergence \citep[cf.,][]{Graham_Niu_Powell_WP2019}.
If a researcher is working with a sparse network and concerned about
dyadic dependence, then she should base inference on Theorem \ref{thm: asymptotic_distribution}.
If the graphon is degenerate or, more strongly, the elements of $\left[Y_{ij}\right]_{1\leq i\leq N,1\leq j\leq M}$
are, in fact, iid, then her inferences will remain valid (since Theorem
\ref{thm: asymptotic_distribution} specializes to the ``rare events''
result of \citet{HaiYing_ICML2020} in that case).

\subsection*{Aggregate effects}

Define $e_{ni}\left(x\right)=e\left(\alpha_{0,n}+z\left(W_{i},x\right)'\beta_{0}\right)$
and $\hat{e}_{ni}\left(x\right)=e\left(\hat{\alpha}_{n}+z\left(W_{i},x\right)'\hat{\beta}\right)$
and consider an estimate of total unit sales for a product with attribute
vector $X_{j}=x$ of

\[
\hat{\gamma}_{n}\left(x\right)=\sum_{i=1}^{N}\hat{e}_{ni}\left(x\right).
\]
Under dense network asymptotics this statistic would diverge as $n\rightarrow\infty$.
Under sparse network asymptotics the sum $\sum_{i=1}^{N}e_{ni}\left(x\right)$
behaves like an average because its summands are $O\left(N^{-1}\right)$.
Consequently, $\hat{\gamma}_{n}\left(x\right)$ has a well-defined
probability limit of
\begin{equation}
\gamma_{0}\left(x\right)=\underset{n\rightarrow\infty}{\lim}\sum_{i=1}^{N}e_{ni}\left(x\right)=\left(1-\phi\right)\alpha_{0}\mathbb{E}\left[\exp\left(z\left(W_{i},x\right)'\beta_{0}\right)\right].\label{eq: conditional_average_product degree}
\end{equation}
This probability limit reflects the boundedness of average product
degree in sparse networks (i.e., in expectation, total sales of a
product are finite, even asymptotically). This result also holds within
a sub-population of products with characteristics $X_{j}=x$. Parameter
\eqref{eq: conditional_average_product degree} corresponds to a conditional
version of $\lambda_{0,n}^{p}$, the average product degree parameter
defined in Section \ref{sec: population} above.

To derive the rate of convergence and limit distribution of $\hat{\gamma}_{n}\left(x\right)$
I proceed in the usual way. A mean value expansion and Theorem \eqref{thm: asymptotic_distribution}
together yield
\begin{align}
\sqrt{n}\left(\hat{\gamma}_{n}\left(x\right)-\gamma_{0}\left(x\right)\right)\approx & \sqrt{n}\sum_{i=1}^{N}\left\{ e_{ni}\left(x\right)-\gamma_{0}\left(x\right)\right\} \nonumber \\
 & +\sum_{i=1}^{N}\bar{e}_{ni}\left(x\right)\left[1-\bar{e}_{ni}\left(x\right)\right]\left(\begin{array}{cc}
1 & z\left(W_{i},x\right)'\end{array}\right)\Gamma_{0}^{-1}\label{eq: total_sales_mean_value}\\
 & \times n^{3/2}S_{n}\left(\theta_{0,n}\right).\nonumber 
\end{align}
By the conditional mean zero property of the score function, the two
terms in \eqref{eq: total_sales_mean_value} are asymptotically uncorrelated.
The variance of the first term in \eqref{eq: total_sales_mean_value}
is
\[
\mathbb{V}\left(\sqrt{n}\sum_{i=1}^{N}\left\{ e_{ni}\left(x\right)-\gamma_{0}\left(x\right)\right\} \right)=O\left(n^{2}\left(1-\phi_{n}\right)\rho_{n}^{2}\right)=O\left(1\right),
\]
whereas the Jacobian in the second term has the approximation
\begin{multline*}
\sum_{i=1}^{N}\bar{e}_{ni}\left(x\right)\left[1-\bar{e}_{ni}\left(x\right)\right]\left(\begin{array}{cc}
1 & z\left(W_{i},x\right)'\end{array}\right)=\\
\frac{\alpha_{0}\left(1-\phi_{n}\right)}{N}\sum_{i=1}^{N}\exp\left(z\left(W_{i},x\right)'\beta_{0}\right)\left(\begin{array}{cc}
1 & z\left(W_{i},x\right)'\end{array}\right)+O_{p}\left(n^{-1}\right).
\end{multline*}
Defining $\Phi_{0}\left(x\right)\overset{def}{\equiv}$$\alpha_{0}\left(1-\phi\right)\left(\begin{array}{cc}
\mathbb{E}\left[\exp\left(z\left(W_{i},x\right)'\beta_{0}\right)\right] & \mathbb{E}\left[\exp\left(z\left(W_{i},x\right)'\beta_{0}\right)z\left(W_{i},x\right)\right]\end{array}'\right)$, $\Lambda_{0}\left(x\right)=\underset{n\rightarrow\infty}{\lim}n^{2}\left(1-\phi\right)\mathbb{V}\left(e_{ni}\left(x\right)-\gamma_{0}\left(x\right)\right)$,
and $\Omega_{0}=\Gamma_{0}^{-1}\left[\frac{\tilde{\Sigma}_{1}^{c}}{1-\phi}+\frac{\tilde{\Sigma}_{1}^{p}}{\phi}+\frac{\tilde{\Sigma}_{3}}{\phi\left(1-\phi\right)}\right]\Gamma_{0}^{-1}$
suggest that
\[
\sqrt{n}\left(\hat{\gamma}_{n}\left(x\right)-\gamma_{0}\left(x\right)\right)\rightarrow N\left(0,\Lambda_{0}\left(x\right)+\Phi_{0}\left(x\right)\Omega_{0}\Phi_{0}\left(x\right)'\right)
\]
as $n\rightarrow\infty$. Aggregate effects are estimable with the
same degree of precision as the logit coefficients themselves.

\subsection*{Average partial effects}

Next consider estimating the \emph{average} marginal effect of unit
increases in the elements of $Z_{ij}$ on making a purchase:
\begin{equation}
\hat{\gamma}_{n}=\frac{1}{NM}\sum_{i=1}^{N}\sum_{j=1}^{M}\hat{e}_{nij}\left[1-\hat{e}_{nij}\right]Z_{ij}.\label{eq: avg_marginal_effects}
\end{equation}
Recall that $e_{nij}=e\left(\alpha_{0,n}+Z_{ij}\beta_{0}\right)$
and $\hat{e}_{nij}=e\left(\hat{\alpha}_{n}+Z_{ij}'\hat{\beta}\right)$.
Interest in average partial effects of this type is widespread in
modern micro-econometric empirical research \citep[cf.,][]{Blundel_Powell_WC03,Wooldridge_IIEM05}.
Since \eqref{eq: avg_marginal_effects} \emph{is} an average of summands,
each of which is $O\left(n^{-1}\right)$, we might expect some variance
reduction relative to the aggregate case just discussed. In a certain
sense, this conjecture appears to be correct. 

Define $\gamma_{0,n}=\mathbb{E}_{N,M}\left[e_{nij}\left(1-e_{nij}\right)Z_{ij}\right]=O\left(\rho_{n}\right)$;
a mean-value expansion and some re-scaling yields
\begin{align}
\rho_{n}n^{3/2}\left(\frac{\hat{\gamma}_{n}-\gamma_{0,n}}{\rho_{n}}\right) & =n^{3/2}\left\{ \frac{1}{NM}\sum_{i=1}^{N}\sum_{j=1}^{M}\left[e_{nij}\left(1-e_{nij}\right)Z_{ij}-\gamma_{0,n}\right]\right\} \nonumber \\
 & +n\left\{ \frac{1}{NM}\sum_{i=1}^{N}\sum_{j=1}^{M}\bar{e}_{nij}\left[1-\bar{e}_{nij}\right]\left[1-2\bar{e}_{nij}\right]\left[\begin{array}{cc}
Z_{ij} & Z_{ij}Z_{ij}'\end{array}\right]\right\} \label{eq: APE}\\
 & \times n^{1/2}\left(\hat{\theta}_{n}-\theta_{0,n}\right).\nonumber 
\end{align}
As above, the conditional mean zero property of the score function
ensures that the two terms in \eqref{eq: APE} are asymptotically
uncorrelated. We rescale the estimate and parameter using $\rho_{n}$
since $\gamma_{0,n}\rightarrow0$ as $n\rightarrow\infty$ \citep[cf.,][]{Bickel_et_al_AS11}.
The need to rescale the Jacobian in \eqref{eq: APE} stems from the
observation that
\begin{multline*}
n\left\{ \frac{1}{NM}\sum_{i=1}^{N}\sum_{j=1}^{M}\bar{e}_{nij}\left[1-\bar{e}_{nij}\right]\left[1-2\bar{e}_{nij}\right]\left[\begin{array}{cc}
Z_{ij} & Z_{ij}Z_{ij}'\end{array}\right]\right\} =\\
\frac{1}{NM}\sum_{i=1}^{N}\sum_{j=1}^{M}\alpha_{0}\exp\left(Z_{ij}'\beta_{0}\right)\left[\begin{array}{cc}
Z_{ij} & Z_{ij}Z_{ij}'\end{array}\right]+O_{p}\left(n\rho_{n}^{2}\right).
\end{multline*}

Next observe that first term in \eqref{eq: APE} is a two-sample U-Statistics.
A Hoeffding variance decomposition gives
\[
\mathbb{V}\left(\frac{1}{NM}\sum_{i=1}^{N}\sum_{j=1}^{M}e_{nij}\left(1-e_{nij}\right)Z_{ij}\right)=\frac{\Lambda_{1n}^{c}}{N}+\frac{\Lambda_{1n}^{p}}{M}+\frac{1}{NM}\left[\Lambda_{2n}-\Lambda_{1n}^{c}-\Lambda_{1n}^{p}\right]
\]
with
\begin{align*}
\Lambda_{1n}^{c} & =\mathbb{C}\left(e_{nij}\left(1-e_{nij}\right)Z_{ij},e_{nik}\left(1-e_{nik}\right)Z_{ik}'\right)\\
\Lambda_{1n}^{p} & =\mathbb{C}\left(e_{nji}\left(1-e_{nji}\right)Z_{ji},e_{nki}\left(1-e_{nki}\right)Z_{ki}'\right)\\
\Lambda_{2n} & =\mathbb{V}\left(e_{nij}\left(1-e_{nij}\right)Z_{ij}\right).
\end{align*}
Inspection indicates that $\Lambda_{1n}^{c}=O\left(\rho_{n}^{2}\right)$,
$\Lambda_{1n}^{p}=O\left(\rho_{n}^{2}\right)$ and $\Lambda_{2n}^{c}=O\left(\rho_{n}^{2}\right)$
and hence that
\[
n^{3}\mathbb{V}\left(\frac{1}{NM}\sum_{i=1}^{N}\sum_{j=1}^{M}e_{nij}\left(1-e_{nij}\right)Z_{ij}\right)=\frac{\tilde{\Lambda}_{1}^{c}}{1-\phi}+\frac{\tilde{\Lambda}_{1}^{p}}{\phi}+O\left(n^{-1}\right).
\]

Putting these calculations together suggests that

\[
\rho_{n}n^{3/2}\left(\frac{\hat{\gamma}_{n}-\gamma_{0,n}}{\rho_{n}}\right)\overset{D}{\rightarrow}\mathcal{N}\left(\frac{\tilde{\Lambda}_{1}^{c}}{1-\phi}+\frac{\tilde{\Lambda}_{1}^{p}}{\phi}+\Phi_{0}\Omega_{0}\Phi_{0}'\right)
\]
with $\Phi_{0}\overset{def}{\equiv}\alpha_{0}\mathbb{E}\left[\exp\left(Z_{ij}'\beta_{0}\right)\left[\begin{array}{cc}
Z_{ij} & Z_{ij}Z_{ij}'\end{array}\right]\right]$.

If we set $T=NM=O\left(n^{2}\right)$, then we have that $T^{1/4}\left(\hat{\theta}_{n}-\theta_{0,n}\right)$
has a Gaussian limit distribution. The rate of convergence of $\hat{\theta}_{n}$
toward $\theta_{0,n}$ is slow. For average partial effects we need
to rescale in order ensure a meaningful probability limit. Let $\gamma_{0,n}^{*}=\gamma_{0,n}/\rho_{n}$
and similarly for $\hat{\gamma}_{n}$; the result above implies that
$T^{1/4}\left(\hat{\gamma}_{n}^{*}-\gamma_{0,n}^{*}\right)$ is also
Gaussian. In this sense the rates-of-convergence for the logit coefficients
and their APEs coincide. However, if we think in terms of the resulting
implied approximation to the finite sample distribution of the two
parameter estimates, we have $\mathbb{V}\left(\hat{\theta}_{n}\right)=O\left(T^{-1/2}\right)=O\left(n^{-1}\right)$,
but $\mathbb{V}\left(\hat{\gamma}_{n}\right)=O\left(T^{-3/2}\right)=O\left(n^{-3}\right)$.
In this sense inference on APEs appears to be more precise.

\subsection*{Areas for additional research}

For empirical researchers the main implication of this paper is to
use an estimate for the variance of $S_{N}$ that includes all components
-- even ones that are negligible under certain asymptotic sequences
-- when constructing standard errors. This is not a new idea. In
the context of U-statistics it goes back to \citet{Hoeffding_AMS48}.
It is implicit in \citet{Holland_Leinhardt_SM76} in their work on
subgraph counts; see also the recent work on dyadic regression by
\citet{Fafchamp_Gubert_JDE07}, \citet{Cameron_Miller_WP14} and \citet{Aronow_et_al_PA17},
as well as that on density weighted average derivatives by \citet{Cattaneo_et_al_ET14}.
However, the small amount of extant formal limit theory for dyadic
regression (cited earlier) suggests different approaches to variance
estimation. This paper has outlined an asymptotic framework that provides
formal justification for one of the leading ``practical'' approaches
to inference in the presence of dyadic dependence. \citet{Graham_Book_DR_Chap2020}
discusses variance estimation for dyadic regression in detail, advocating
a variant of the estimate proposed by \citet{Fafchamp_Gubert_JDE07},
\citet{Cameron_Miller_WP14} and \citet{Aronow_et_al_PA17}. Theorem
\ref{thm: asymptotic_distribution} provides a formal justification
for this recommendation.

Many outstanding questions remain. Can the above framework be generalized
to a generic dyadic M-estimation setting? What is the ``general''
notion of ``sparseness'' needed for this? Extensions to semiparametric
regression models are also of interest. In recent work, \citet{Menzel_arXiv17}
and \citet{Davezies_et_al_AS20} propose bootstrap procedures for
dyadic regression. Are these procedures also valid under sparse network
asymptotics and, if not, how might they be adapted to be so? The aggregate
and average effect examples sketched above suggest that the systematic
exploration of policy analysis questions -- considered under dense
network asymptotics by \citet{Graham_HBE2020} -- would be interesting.
Finally, although it seems likely that -- in the absence of imposing
more structure -- that the composite maximum likelihood estimator
is efficient, this is currently only a conjecture. 

\appendix
\begin{center}
\textbf{\Large{}Appendix}{\Large\par}
\par\end{center}

The appendix includes proofs of the theorems stated in the main text
as well as statements and proofs of supplemental lemmata -- called
limonata here. All notation is as established in the main text unless
stated otherwise. Equation number continues in sequence with that
established in the main text.

\section{\label{app: lemmas}Preliminary lemmata and proofs}

Let $R_{ij}=\left(1,Z_{ij}'\right)'$ and note that, for $e\left(v\right)=\exp\left(v\right)/\left[1+\exp\left(v\right)\right],$
we have that $e'\left(v\right)=e\left(v\right)\left[1-e\left(v\right)\right]$
and $e''\left(v\right)=e\left(v\right)\left[1-e\left(v\right)\right]\left[1-2e\left(v\right)\right]$.
With this notation we can write the first three derivatives of the
kernel function of the composite log-likelihood with respect $\theta_{n}$
as
\begin{align}
s_{ij}\left(\theta_{n}\right) & =\left(Y_{ij}-e_{ij}\left(\theta_{n}\right)\right)R_{ij}\label{eq: 1st_der_logl}\\
\frac{\partial s_{ij}\left(\theta_{n}\right)}{\partial\theta'} & =-e_{ij}\left(\theta_{n}\right)\left[1-e_{ij}\left(\theta_{n}\right)\right]R_{ij}R_{ij}'\label{eq: 2nd_der_logl}\\
\frac{\partial}{\partial\theta'}\left\{ \frac{\partial s_{ij}\left(\theta_{n}\right)}{\partial\theta_{p}}\right\}  & =-e_{ij}\left(\theta_{n}\right)\left[1-e_{ij}\left(\theta_{n}\right)\right]\left[1-2e_{ij}\left(\theta_{n}\right)\right]R_{ij}R_{ij}'R_{p,ij}\label{eq: 3rd_der_logl}
\end{align}
with \eqref{eq: 3rd_der_logl} holding for $\text{for \ensuremath{p=1,\ldots,\dim\left(\theta_{n}\right)}.}$

Let $\mathbf{t}=\left(\theta_{n}-\theta_{0,n}\right)$ and recall
that $\alpha_{n}=\ln\left(\alpha/n\right)$ and $\alpha_{0,n}=\ln\left(\alpha_{0}/n\right)$.
This implies that $\mathbf{t}=\left(\ln\left(\alpha/\alpha_{0}\right),\left(\beta-\beta_{0}\right)'\right)'$
does not vary with $n$ and hence that $\mathbf{t}\in\mathbb{T}$
with $\mathbb{T}$ compact by Assumption \ref{ass: logit_regression}.
Associated with any $\mathbf{t}\in\mathbb{T}$ is a $\theta_{n}\in\Theta_{n}$;
furthermore associated with this $\theta_{n}$ is a $\theta\in\Theta$.
With these preliminaries we can show that $nH_{n}\left(\theta_{n}\right)$
converges uniformly to $\Gamma\left(\theta\right)$, as defined in
equation \eqref{eq: GAMMA} of the main text.
\begin{lem}
\label{lem: Hessian_Convergence}\textsc{(Uniform Hessian Convergence)
}Under Assumptions \ref{ass: sampling}, \ref{ass: logit_regression}
and \ref{ass: identification} 
\[
\underset{\theta\in\Theta}{\sup}\left\Vert nH_{n}\left(\theta_{n}\right)-\Gamma\left(\theta\right)\right\Vert \overset{p}{\rightarrow}0.
\]
\end{lem}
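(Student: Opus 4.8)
The plan is to exploit the closed-form logit second derivative \eqref{eq: 2nd_der_logl} to reduce the rescaled Hessian to a two-sample $U$-statistic with bounded, smooth summands, and then apply a uniform law of large numbers. Substituting \eqref{eq: 2nd_der_logl} gives
\[
nH_{n}\left(\theta_{n}\right)=-\frac{1}{NM}\sum_{i=1}^{N}\sum_{j=1}^{M}n\,e_{ij}\left(\theta_{n}\right)\left[1-e_{ij}\left(\theta_{n}\right)\right]R_{ij}R_{ij}'.
\]
Under the sparse reparameterization $\alpha_{n}=\ln\left(\alpha/n\right)$, writing $u_{ij}=\frac{\alpha}{n}\exp\left(Z_{ij}'\beta\right)$ we have $e_{ij}\left(\theta_{n}\right)=u_{ij}/\left(1+u_{ij}\right)$, and hence $n\,e_{ij}\left(\theta_{n}\right)\left[1-e_{ij}\left(\theta_{n}\right)\right]=\alpha\exp\left(Z_{ij}'\beta\right)\left(1+u_{ij}\right)^{-2}$. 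This is the crucial simplification: the logit form makes the $n$-rescaled summand exactly $\alpha\exp\left(Z_{ij}'\beta\right)$ up to the factor $\left(1+u_{ij}\right)^{-2}$.

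First I would dispatch the $\left(1+u_{ij}\right)^{-2}$ factor. Because $\mathbb{A}$ and $\mathbb{Z}$ are compact (Assumption \ref{ass: logit_regression}), $u_{ij}\leq\bar{\alpha}\sup_{z\in\mathbb{Z},\beta\in\mathbb{B}}\exp\left(z'\beta\right)/n=O\left(n^{-1}\right)$ uniformly in $\theta$ and in the data, so $\left|\left(1+u_{ij}\right)^{-2}-1\right|=O\left(n^{-1}\right)$ uniformly as well. Since $\alpha\exp\left(Z_{ij}'\beta\right)R_{ij}R_{ij}'$ is bounded in norm on these compacts, the associated remainder in $nH_{n}\left(\theta_{n}\right)$ is bounded by a deterministic $O\left(n^{-1}\right)$ uniformly over $\theta\in\Theta$. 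This reduces the claim to uniform convergence of
\[
\bar{H}_{n}\left(\theta\right)\overset{def}{\equiv}-\frac{1}{NM}\sum_{i=1}^{N}\sum_{j=1}^{M}\alpha\exp\left(Z_{ij}'\beta\right)R_{ij}R_{ij}'
\]
to $\Gamma\left(\theta\right)=-\alpha\mathbb{E}\left[\exp\left(Z_{12}'\beta\right)R_{12}R_{12}'\right]$, which is exactly \eqref{eq: GAMMA} since $R_{12}R_{12}'$ equals the block matrix appearing there.

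The term $\bar{H}_{n}\left(\theta\right)$ is a two-sample $U$-statistic with kernel $g\left(W_{i},X_{j};\theta\right)=\alpha\exp\left(z\left(W_{i},X_{j}\right)'\beta\right)R_{ij}R_{ij}'$, depending only on the mutually independent i.i.d. sequences $\left\{ W_{i}\right\}$ and $\left\{ X_{j}\right\}$. For each fixed $\theta$ we have $\mathbb{E}\left[g\right]=-\Gamma\left(\theta\right)$, and a Hoeffding decomposition yields $\mathbb{V}\left(\bar{H}_{n}\left(\theta\right)\right)=O\left(N^{-1}\right)+O\left(M^{-1}\right)=O\left(n^{-1}\right)$, so $\bar{H}_{n}\left(\theta\right)\overset{p}{\rightarrow}\Gamma\left(\theta\right)$ pointwise. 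To upgrade this to uniform convergence I would establish stochastic equicontinuity: since $z\left(W_{i},X_{j}\right)\in\mathbb{Z}$ compact and $\theta$ ranges over compact $\Theta$, the gradient $\partial g/\partial\theta$ is bounded uniformly in $\left(W_{i},X_{j}\right)$, so $\theta\mapsto g\left(W_{i},X_{j};\theta\right)$ is Lipschitz with a data-free constant. Pointwise convergence on a finite $\varepsilon$-net of $\Theta$ together with this uniform Lipschitz bound then delivers the sup-norm convergence by the standard covering argument for uniform laws of large numbers.

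The main obstacle is the bookkeeping of the $n$-dependence of the reparameterization alongside the uniformity: because $\theta_{n}$ (and hence $u_{ij}$) depends on $n$, one must verify that the $O\left(n^{-1}\right)$ remainder from the $\left(1+u_{ij}\right)^{-2}$ factor is genuinely uniform over $\theta$ and does not interact with the supremum. A secondary subtlety is that $\bar{H}_{n}$ is a double sum with dyadic dependence rather than an i.i.d. average, so the uniform LLN must be phrased for two-sample $U$-statistics; this dependence is benign here because the Hoeffding projection variances all vanish as $n\rightarrow\infty$, and the equicontinuity bound is data-free and therefore unaffected by the dependence.
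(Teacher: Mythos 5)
Your proposal is correct and follows essentially the same route as the paper: both reduce $nH_{n}\left(\theta_{n}\right)$ to the two-sample $U$-statistic with summands $\alpha\exp\left(Z_{ij}'\beta\right)R_{ij}R_{ij}'$ plus a uniformly $O\left(n^{-1}\right)$ remainder, obtain pointwise convergence to $\Gamma\left(\theta\right)$ from a law of large numbers for $U$-statistics, and upgrade to uniformity via a Lipschitz-in-$\theta$ bound on a compact parameter set (the paper cites Lemma 2.9 of Newey and McFadden, whose proof is exactly your $\varepsilon$-net argument; it obtains the Lipschitz constant from the third derivative of the log-likelihood rather than from the limiting kernel, an immaterial difference). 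Your explicit treatment of the $\left(1+u_{ij}\right)^{-2}$ factor is, if anything, slightly more careful than the paper's $O_{p}\left(n^{-2}\right)$ remainder.
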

\begin{proof}
Recall that $\theta_{n}=\theta_{0,n}+\mathbf{t}$ and hence that $H_{n}\left(\theta_{0,n}+\mathbf{t}\right)=H_{n}\left(\theta_{n}\right).$
The mean value theorem, as well as compatibility of the Frobenius
matrix norm with the Euclidean vector norm, gives for any $\mathbf{t}$
and $\bar{\mathbf{t}}$ both in $\mathbb{T}$, 
\begin{align*}
\left\Vert H_{n}\left(\theta_{0,n}+\mathbf{t}\right)-H_{n}\left(\theta_{0,n}+\bar{\mathbf{t}}\right)\right\Vert _{2,1} & \leq\sum_{p=1}^{\dim\left(\theta_{n}\right)}\left\Vert \frac{1}{NM}\sum_{i=1}^{N}\sum_{j=1}^{M}\frac{\partial}{\partial\theta'}\left\{ \frac{\partial s_{ij}\left(\theta_{0,n}+\mathbf{t}\right)}{\partial\theta_{p}}\right\} \right\Vert _{F}\left\Vert \mathbf{t}-\bar{\mathbf{t}}\right\Vert _{2}.
\end{align*}
Since $\mathbb{E}\left[e_{ij}\left(\theta_{n}\right)\left[1-e_{ij}\left(\theta_{n}\right)\right]\left[1-2e_{ij}\left(\theta_{n}\right)\right]\right]=O\left(\rho_{n}\right)=O\left(n^{-1}\right)$
we have that, inspecting \eqref{eq: 2nd_der_logl} above, for any
$\mathbf{t}\in\mathbb{T}$,
\begin{align*}
\left\Vert \frac{1}{NM}\sum_{i=1}^{N}\sum_{j=1}^{M}\frac{\partial}{\partial\theta'}\left\{ \frac{\partial s_{ij}\left(\theta_{0,n}+\mathbf{t}\right)}{\partial\theta_{p}}\right\} \right\Vert _{F} & =O_{p}\left(n^{-1}\right).
\end{align*}
This gives $\left\Vert nH_{N}\left(\theta_{0,n}+\mathbf{t}\right)-nH_{n}\left(\theta_{0,n}+\bar{\mathbf{t}}\right)\right\Vert _{2,1}\leq O_{p}\left(1\right)\cdot\left\Vert \mathbf{t}-\bar{\mathbf{t}}\right\Vert _{2}$.
Next, again recalling that $\theta_{0,n}+\mathbf{t}=\theta_{n}$,
we have that
\begin{align*}
H_{n}\left(\theta_{0,n}+\mathbf{t}\right) & =-\frac{1}{NM}\sum_{i=1}^{N}\sum_{j=1}^{M}e_{ij}\left(\theta_{n}\right)\left[1-e_{ij}\left(\theta_{n}\right)\right]R_{ij}R_{ij}'\\
 & =-\frac{1}{NM}\sum_{i=1}^{N}\sum_{j=1}^{M}\frac{\alpha}{n}\exp\left(R_{ij}'\beta\right)R_{ij}R_{ij}'+O_{p}\left(\frac{1}{n^{2}}\right),
\end{align*}
which gives, using a law of large numbers for U-Statistics, $nH_{n}\left(\theta_{n}\right)\overset{p}{\rightarrow}\Gamma\left(\theta\right)$
for all $\mathbf{t}\in\mathbb{T}$. The claim then follows from an
application of Lemma 2.9 of \citet[p. 2138]{Newey_McFadden_HBE94}.
\end{proof}

\section{\label{app: proof_of_main_result}Proof of Theorem \ref{thm: asymptotic_distribution}}

\subsection*{Asymptotic variance of the score}

To prove \eqref{eq: variance_of_score}, the decomposition of the
variance of the score given in the main text, and hence that
\[
\mathbb{V}\left(n^{3/2}S_{n}\right)=\frac{\tilde{\Sigma}_{1}^{c}}{1-\phi}+\frac{\tilde{\Sigma}_{1}^{p}}{\phi}+\frac{\tilde{\Sigma}_{3}}{\phi\left(1-\phi\right)}+O\left(n^{-1}\right)
\]
Let $e_{ij}=e\left(\alpha_{0,n}+Z_{ij}'\beta_{0}\right)$; using the
definitions given in \eqref{eq: SIGMA_definitions} of the main text
we have that
\begin{align}
\Sigma_{1n}^{c}= & \mathbb{E}\left[\left(Y_{12}-e_{12}\right)\left(Y_{13}-e_{13}\right)R_{12}R_{13}'\right]\nonumber \\
= & O\left(\rho_{n}^{2}\right)\label{eq: order_SIGMA1_c}
\end{align}
and also that
\begin{align}
\Sigma_{1n}^{p}= & \mathbb{E}\left[\left(Y_{21}-e_{21}\right)\left(Y_{31}-e_{31}\right)R_{21}R_{31}'\right].\nonumber \\
= & O\left(\rho_{n}^{2}\right).\label{eq: order_SIGMA1_p}
\end{align}

Turning to $\Sigma_{2n}$ and $\Sigma_{3n}$ we get that
\begin{align}
\Sigma_{2n}= & \mathbb{E}\left[\mathbb{E}\left[\left.\left(Y_{12}-e_{12}\right)R_{21}\right|W_{1},X_{2},A_{1},B_{2}\right]\right.\nonumber \\
 & \left.\times\mathbb{E}\left[\left.\left(Y_{12}-e_{12}\right)R_{21}\right|W_{1},X_{2},A_{1},B_{2}\right]'\right]\nonumber \\
= & O\left(\rho_{n}^{2}\right)\label{eq: order_SIGMA2}
\end{align}
and that
\begin{align}
\Sigma_{3n}= & \mathbb{E}\left[\left\{ s_{nij}-\bar{s}_{nij}\right\} \left\{ s_{nij}-\bar{s}_{nij}\right\} '\right]\nonumber \\
= & O\left(\rho_{n}\right)\label{eq: order_SIGMA3}
\end{align}
by virtue of the equality $Y_{ij}^{2}=Y_{ij}$(which holds because
$Y_{ij}$ is binary-valued). 

From \eqref{eq: sparse_prob} we have that $\rho_{n}=O\left(n^{-1}\right)$,
hence \eqref{eq: order_SIGMA1_c} implies that $n^{2}\Sigma_{1n}^{c}=O\left(1\right)$,
\eqref{eq: order_SIGMA1_p} that $n^{2}\Sigma_{1n}^{p}=O\left(1\right)$,
and \eqref{eq: order_SIGMA3} that $n^{2}\Sigma_{1n}^{p}=O\left(1\right)$.
This gives

\begin{align*}
\mathbb{V}\left(S_{n}\right)= & O\left(\frac{\rho_{n}^{2}}{N}\right)+O\left(\frac{\rho_{n}^{2}}{M}\right)+O\left(\frac{\rho_{n}^{2}}{MN}\right)+O\left(\frac{\rho_{n}}{MN}\right)\\
= & O\left(\left[\frac{\lambda_{0,n}^{c}}{M}\right]^{2}\frac{1}{N}\right)+O\left(\left[\frac{\lambda_{0,n}^{c}}{M}\right]^{3}\right)+O\left(\left[\frac{\lambda_{0,n}^{c}}{M}\right]^{2}\frac{1}{MN}\right)+O\left(\frac{\lambda_{0,n}^{c}}{M}\frac{1}{MN}\right)\\
= & O\left(\left[\frac{\lambda_{0,n}^{c}}{\phi_{n}}\right]^{2}\frac{1}{\left(1-\phi_{n}\right)}\frac{1}{n^{3}}\right)+O\left(\left[\frac{\lambda_{0,n}^{c}}{\phi_{n}}\right]^{3}\frac{1}{n^{3}}\right)\\
 & +O\left(\left[\frac{\lambda_{0,n}^{c}}{\phi_{n}}\right]^{2}\frac{1}{\phi_{n}\left(1-\phi_{n}\right)}\frac{1}{n^{4}}\right)+O\left(\frac{\lambda_{0,n}^{c}}{\phi_{n}^{2}\left(1-\phi_{n}\right)}\frac{1}{n^{3}}\right)\\
= & O\left(n^{3}\right)+O\left(n^{3}\right)+O\left(n^{4}\right)+O\left(n^{3}\right),
\end{align*}
as needed.

\subsection*{Triangular array setup}

Consider the following triangular array $\left\{ Z_{nt}\right\} $:
\begin{align*}
Z_{n1} & =\frac{1}{N}\bar{s}_{1n1}^{c}\\
 & \vdots\\
Z_{nN} & =\frac{1}{N}\bar{s}_{1nN}^{c}\\
Z_{nN+1} & =\frac{1}{M}\bar{s}_{1n1}^{p}\\
 & \vdots\\
Z_{nN+M} & =\frac{1}{M}\bar{s}_{1nM}^{p}\\
Z_{nN+M+1} & =\frac{1}{NM}\left(s_{n11}-\bar{s}_{n11}\right)\\
 & \vdots\\
Z_{nN+M+NM} & =\frac{1}{NM}\left(s_{nNM}-\bar{s}_{nNM}\right),
\end{align*}
with $T=T\left(n\right)=N+M+NM$. For any vector $X_{i},$ let $X_{1}^{t}=\left(X_{1},\ldots,X_{t}\right)'$.
Iterated expectations, as well as the conditional independence relationships
implied by dyadic dependence (Assumptions \ref{ass: sampling} and
\ref{ass: logit_regression}), yield
\begin{align*}
\mathbb{E}\left[\left.Z_{ni}\right|Z_{n1}^{i-1}\right] & =0,
\end{align*}
establishing that $\left\{ Z_{Ni}\right\} $ is a martingale difference
sequence (MDS). The variance of this MDS is
\begin{align*}
\bar{\Delta}_{n} & \overset{def}{\equiv}\mathbb{V}\left(\sum_{t=1}^{T}Z_{ni}\right)\\
 & =\frac{\Sigma_{1n}^{c}}{N}+\frac{\Sigma_{1n}^{p}}{M}+\frac{\Sigma_{3n}}{NM}.
\end{align*}

To show asymptotic normality of $N^{3/2}S_{n}\left(\theta_{0,n}\right)$
I first show, recalling decomposition \eqref{eq: score_decomposition}
in the main test, that, for a vector of constants $c,$
\begin{equation}
\left(c'\bar{\Delta}_{n}c\right)^{-1/2}c'S_{n}=\left(c'\bar{\Delta}_{n}c\right)^{-1/2}c'\left[U_{1n}+V_{n}\right]+o_{p}\left(1\right)\label{eq: U2N_is_op1}
\end{equation}
and subsequently that
\begin{equation}
\left(c'\bar{\Delta}_{n}c\right)^{-1/2}c'\left[U_{1n}+V_{n}\right]\overset{p}{\rightarrow}\mathcal{N}\left(0,1\right).\label{eq: asym_norm}
\end{equation}

To show \eqref{eq: U2N_is_op1} observe that
\begin{align*}
c'\bar{\Delta}_{n}c & =O\left(\frac{\rho_{n}^{2}}{N}+\frac{\rho_{n}^{2}}{M}+\frac{\rho_{n}}{NM}\right)\\
 & =O\left(\frac{\rho_{n}^{2}}{n}\left(\frac{1}{1-\phi_{n}}+\frac{1}{\phi_{n}}+\frac{1}{\left(1-\phi_{n}\right)\lambda_{n}^{c}}\right)\right)\\
 & =O\left(\frac{\rho_{n}^{2}}{n}\right)
\end{align*}
and hence that $\left(c'\bar{\Delta}_{N}c\right)^{-1}=O\left(n\rho_{n}^{-2}\right)$
as long as $\lambda_{n}^{c}\geq C>0$ and $\phi\in\left(0,1\right)$
(see Assumptions \ref{ass: sampling} and \ref{ass: logit_regression}).
Additionally using \eqref{eq: order_SIGMA2} yields
\begin{align*}
\left(c'\bar{\Delta}_{n}c\right)^{-1/2}c'U_{2n} & =O_{p}\left(n^{1/2}\rho_{n}^{-1}\right)O_{p}\left(\rho_{n}^{2}\right)\\
 & =O_{p}\left(n^{1/2}\rho_{n}\right)\\
 & =o_{p}\left(1\right),
\end{align*}
as long as $\rho_{n}=O\left(n^{-\alpha}\right)$ for $\alpha>\frac{1}{2}$,
as is maintained here. This proves assertion \eqref{eq: U2N_is_op1}.

\subsection*{Central limit theorem}

To show \eqref{eq: asym_norm} I verify the conditions of Corollary
5.26 of Theorem 5.24 in \citet{White_Bk01}; specifically the Lyapunov
condition, for $r>2$

\begin{equation}
\sum_{t=1}^{T\left(n\right)}\mathbb{E}\left[\left(\frac{c'Z_{nt}}{\left(c'\bar{\Delta}_{nt}c\right)}\right)^{r}\right]=o\left(1\right)\label{eq: Lypaunov}
\end{equation}
and the stability condition
\begin{equation}
\sum_{t=1}^{T\left(n\right)}\frac{\left(c'Z_{Nt}\right)^{2}}{c'\bar{\Delta}_{N}c}\overset{p}{\rightarrow}1.\label{eq: stability}
\end{equation}
I will show \eqref{eq: Lypaunov} for $r=3$. Observe that
\begin{align*}
\mathbb{E}\left[\left(\frac{1}{N}c'\bar{s}_{1ni}^{c}\right)^{3}\right]= & O\left(\frac{\rho_{n}^{3}}{N^{3}}\right)\\
\mathbb{E}\left[\left(\frac{1}{M}c'\bar{s}_{1ni}^{p}\right)^{3}\right]= & O\left(\frac{\rho_{n}^{3}}{M^{3}}\right)\\
\mathbb{E}\left[\left(\frac{1}{NM}c'\left(s_{n11}-\bar{s}_{n11}\right)\right)^{3}\right]= & O\left(\frac{\rho_{n}}{N^{3}M^{3}}\right)
\end{align*}
These calculations, as well as independence of summands $1$ to $N$,
$N+1$ to $N+M$ and $N+M+1$ to $N+M+NM$, imply that

\begin{align*}
\sum_{t=1}^{T\left(n\right)}\mathbb{E}\left[\left(\frac{c'Z_{Nt}}{\left(c'\bar{\Delta}_{N}c\right)}\right)^{3}\right]= & O_{p}\left(n^{3/2}\rho_{N}^{-3}\right)\left\{ O\left(\frac{\rho_{n}^{3}}{N^{2}}\right)+O\left(\frac{\rho_{n}^{3}}{M^{2}}\right)+O\left(\frac{\rho_{n}}{N^{2}M^{2}}\right)\right\} \\
= & O_{p}\left(\frac{1}{\left(1-\phi_{n}\right)^{2}n^{1/2}}\right)+O_{p}\left(\frac{1}{\phi_{n}^{2}n^{1/2}}\right)+O_{p}\left(\frac{1}{\left(1-\phi_{n}\right)^{2}\lambda_{n}^{c}n^{1/2}}\right)\\
 & O_{p}\left(n^{-1/2}\right)\\
 & o_{p}\left(1\right)
\end{align*}
as required.

To verify the stability condition \eqref{eq: stability} I re-write
it as
\begin{equation}
\sum_{t=1}^{T\left(n\right)}\frac{1}{n\left(c'\bar{\Delta}_{n}c\right)}n\left\{ \left(c'Z_{nt}\right)^{2}-\mathbb{E}\left[\left(c'Z_{nt}\right)^{2}\right]\right\} \overset{p}{\rightarrow}0\label{eq: new_stability}
\end{equation}
Since $n\left(c'\bar{\Delta}_{N}c\right)^{-1}=O\left(n\cdot n\rho_{N}^{-2}\right)=O\left(1\right)$
the stability condition \eqref{eq: stability} will hold if the numerator
in \eqref{eq: new_stability} -- $\sum_{t=1}^{T\left(n\right)}n\left\{ \left(c'Z_{nt}\right)^{2}-\mathbb{E}\left[\left(c'Z_{nt}\right)^{2}\right]\right\} $
-- converges in probability to zero. Expanding the square we get
that
\[
\mathbb{E}\left[\left(n\left\{ \left(c'Z_{nt}\right)^{2}-\mathbb{E}\left[\left(c'Z_{nt}\right)^{2}\right]\right\} \right)^{2}\right]=n^{2}\left\{ \mathbb{E}\left[\left(c'Z_{nt}\right)^{4}\right]-\left(\mathbb{E}\left[\left(c'Z_{nt}\right)^{2}\right]\right)^{2}\right\} .
\]
We then have
\[
\mathbb{E}\left[\left(c'Z_{nt}\right)^{2}\right]=\left\{ \begin{array}{ll}
\frac{1}{N^{2}}c'\Sigma_{1n}^{c}c=O\left(\left[\frac{\lambda_{n}^{c}}{\phi_{n}\left(1-\phi_{n}\right)}\right]^{2}\frac{1}{n^{4}}\right), & t=1,\ldots,N\\
\frac{1}{M^{2}}c'\Sigma_{1n}^{p}c=O\left(\left[\frac{\lambda_{n}^{c}}{\phi_{n}^{2}}\right]^{2}\frac{1}{n^{4}}\right), & t=N+1,\ldots,N+M\\
\frac{1}{N^{2}M^{2}}c'\Sigma_{3N}c=O\left(\frac{\lambda_{n}^{c}}{\phi_{n}^{3}\left(1-\phi_{n}\right)^{2}}\frac{1}{n^{5}}\right), & t=N+M+1,\ldots,N+M+NM
\end{array}\right.
\]
and
\[
\mathbb{E}\left[\left(c'Z_{nt}\right)^{4}\right]=\left\{ \begin{array}{ll}
\frac{\mathbb{E}\left[\left(c'\bar{s}_{1n1}^{c}\right)^{4}\right]}{N^{4}}=O\left(\frac{1}{\left(1-\phi_{n}\right)^{4}}\frac{\rho_{n}^{4}}{n^{4}}\right), & t=1,\ldots,N\\
\frac{\mathbb{E}\left[\left(c'\bar{s}_{1n1}^{p}\right)^{4}\right]}{M^{4}}=O\left(\frac{1}{\phi_{n}^{4}}\frac{\rho_{n}^{4}}{n^{4}}\right), & t=N+1,\ldots,N+M\\
\frac{\mathbb{E}\left[\left(c'\left(s_{n11}-\bar{s}_{n11}\right)\right)^{4}\right]}{N^{4}M^{4}}=O\left(\frac{1}{\phi_{n}^{4}\left(1-\phi_{n}\right)^{4}}\frac{\rho_{n}}{n^{8}}\right), & t=N+M+1,\ldots,N+M+NM
\end{array}\right..
\]
Since $T\left(n\right)=N+M+NM=O\left(n^{2}\right)$, the summands
of $\frac{1}{T\left(n\right)}\sum_{t=1}^{T\left(n\right)}T\left(n\right)n\left\{ \left(c'Z_{nt}\right)^{2}-\mathbb{E}\left[\left(c'Z_{nt}\right)^{2}\right]\right\} $
all have variances which are $O\left(n^{-2}\right)$ or smaller:
\begin{multline*}
T\left(n\right)^{2}n^{2}\left\{ \mathbb{E}\left[\left(c'Z_{nt}\right)^{4}\right]-\left(\mathbb{E}\left[\left(c'Z_{nt}\right)^{2}\right]\right)^{2}\right\} =\\
\left\{ \begin{array}{ll}
T\left(n\right)^{2}n^{2}\left[O\left(n^{-8}\right)+O\left(n^{-8}\right)\right]=O\left(n^{-2}\right), & t=1,\ldots,N\\
T\left(n\right)^{2}n^{2}\left[O\left(n^{-8}\right)+O\left(n^{-8}\right)\right]=O\left(n^{-2}\right), & t=N+1,\ldots,N+M\\
T\left(n\right)^{2}n^{2}\left[O\left(n^{-9}\right)+O\left(n^{-10}\right)\right]=O\left(n^{-3}\right), & t=N+M+1,\ldots,N+M+NM
\end{array}\right.
\end{multline*}
Since the summands of the numerator in \eqref{eq: new_stability}
are all mean zero with variances shrinking to zero as $n\rightarrow\infty$
condition \eqref{eq: new_stability} holds as required.

\bibliographystyle{apalike}
\bibliography{../Networks_Book/Finished/Reference_BibTex/Networks_References}

\end{document}